\documentclass[12pt,centertags,reqno]{amsart}
\usepackage[foot]{amsaddr}
\usepackage{latexsym}
\usepackage[english]{babel}
\usepackage[T1]{fontenc}
\usepackage{float}
\usepackage[utf8]{inputenc}
\usepackage[numbers]{natbib}
\usepackage{amssymb,amsmath}
\usepackage{fancyhdr}
\usepackage{url}
\usepackage{hyperref}
\usepackage{verbatim}
\usepackage{leftidx}
\usepackage{bm}
\usepackage{color,graphicx}
\usepackage{tikz}
\usepackage{bbm}

\usepackage{mathrsfs, mathtools}
\usepackage{stmaryrd}

\usepackage{marvosym}
\mathtoolsset{showonlyrefs}

\usepackage{appendix}

\usepackage{appendix}

\usepackage{soul}

\usepackage{enumerate} 
\usepackage{enumitem} 

\usepackage{booktabs}

\textwidth = 17.60cm
\textheight = 22.00cm
\oddsidemargin = -0.2in
\evensidemargin = -0.2in
\setlength{\parindent}{0pt}
\setlength{\parskip}{5pt plus 2pt minus 1pt}

\numberwithin{equation}{section} \makeatletter
\renewcommand{\subsection}{\@startsection
{subsection}{2}{0mm}{\baselineskip}{-0.25cm}
{\normalfont\normalsize\bf}} \makeatother


\newtheorem{theorem}{Theorem}[section]
\newtheorem{lemma}[theorem]{Lemma}

\newtheorem{definition}[theorem]{Definition}
\newtheorem{remark}[theorem]{Remark}
\newtheorem{proposition}[theorem]{Proposition}

\newtheorem{assumption}[theorem]{Assumption}





\hyphenation{ortho-go-nal
e-xi-stence mi-ni-mi-za-tion re-pre-sen-tation pri-cing ge-ne-ra-li-za-tion ge-ne-ra-ted pro-ba-bi-li-ty stra-te-gy hy-po-the-ti-cal pro-ducts
na-tu-ral di-scus-sed a-ve-ra-ge mi-ni-mi-zes e-sta-bli-shed de-com-po-si-tion mi-ni-mal nu-m\`{e}-ra-ire re-pre-sen-ted po-si-ti-vi-ty}

\sloppy


\begin{document}

\author[C.~Ceci]{Claudia  Ceci}
\address{Claudia  Ceci, Department MEMOTEF,
University of Rome Sapienza, Via del Castro Laurenziano, 9,
Rome, Italy.}\email{claudia.ceci@uniroma1.it}
\author[L.~Semerari]{Luca Semerari\,
}
\address{Luca Semerari, Department MEMOTEF,
University of Rome Sapienza, Via del Castro Laurenziano, 9,
Rome, Italy.}\email{ luca.semerari@uniroma1.it }

\title{Explicit Solution to a government debt reduction problem: a stochastic control approach}


\begin{abstract}

We analyze the problem of optimal  reduction of the debt-to-GDP ratio in a stochastic control setting. The debt-to-GDP dynamics are modeled through a stochastic differential equation in which fiscal policy simultaneously affects both debt accumulation and GDP growth. A key feature of the framework is the introduction of a cost functional that captures the disutility of fiscal surpluses and the perceived benefit of fiscal deficits, thus incorporating the macroeconomic trade-off between tighten and expansionary policies. By applying the Hamilton-Jacobi-Bellman approach, we provide explicit solutions in the case of linear GDP response to the fiscal policies. 
We rigorously analyze threshold-type fiscal strategies in the case of linear impact of the fiscal policy and provide closed-form solutions for the associated value function in relevant regimes. A sensitivity analysis is conducted by varying key model parameters, confirming the robustness of our theoretical findings. The application to debt reduction highlights how fiscal costs and benefits influence optimal interventions, offering valuable insights into sustainable public debt management under uncertainty.
\end{abstract}

\maketitle

{\bf Keywords}: Optimal Stochastic Control, Government Debt Management, Hamilton-Jacobi-Bellman Equation

{\bf AMS Classification}: 
91B30, etc..



\section{Introduction}

The management of sovereign debt is one of the most challenging issues in modern macroeconomics, particularly in the context of high uncertainty, recurring shocks, and strict budget constraints. 
The ratio of public debt-to-GDP (gross domestic product) of a country is widely recognized as the key indicator of debt sustainability, and its stabilization is crucial for ensuring fiscal credibility and long-term economic stability (see \cite{empev}, \cite{wheeler} among others). 
The literature on optimal public debt management spans 
traditions. In deterministic models of economic growth, debt sustainability is often studied through intertemporal budget constraints and deterministic policy rules, see \cite{casa}, \cite{fat} and \cite{domar}. However, these approaches neglect uncertainty and sequential decision-making under risk. 
Recent developments have emphasized stochastic control methods, in which optimal fiscal policies are characterized as the solution to infinite-horizon stochastic optimization problems with a given objective functional.
In \cite{cade2} and \cite{cade1}, the authors develop a government debt management model to study the optimal debt ceiling, defined as the maximum level of the debt ratio at which government intervention is not required. That is, if the debt ratio of a country exceeds its debt ceiling, the government should reduce the debt ratio by generating fiscal surpluses; otherwise, the debt is considered to be under control and no intervention is needed. This definition is consistent with the interpretation of the 60\% threshold in the Maastricht Treaty (Article 104c).
An optimal debt reduction problem is analyzed in \cite{sicon2018} within a singular stochastic control framework that accounts for the current inflation rate of the country. The framework is further extended in \cite{callceci}, where the authors consider a partially observable setting in which the GDP growth rate is affected by an unobservable stochastic factor representing the underlying economic conditions. In \cite{cade2, cade1, callceci, sicon2018}, the analysis focuses exclusively on the taxation problem, allowing the government to intervene through austerity policies such as spending cuts and/or tax increases aimed at reducing public debt. By contrast, \cite{sicon2020} introduces a singular stochastic control model in which the control variable consists of both spending cuts and spending increases/public investment. The objective functional includes a cost term designed to capture the beneficial or adverse effects of contractionary and expansionary fiscal policies.
In all the aforementioned contributions, however, the potential impact of fiscal policy on GDP dynamics is not explicitly modeled. Since expansionary fiscal policies may stimulate GDP growth, while contractionary policies may hinder economic output, any attempt to reduce the debt-to-GDP ratio should account for the effect of fiscal policy on both the debt level and the GDP denominator. More recently, \cite{ceci} proposes, for the first time, a stochastic control framework that explicitly incorporates the interaction between fiscal policy and GDP dynamics, while allowing for both tightening and expansionary measures. Although several relevant cases such as debt smoothing and debt reduction are discussed, the model does not include a cost functional associated with the implementation of fiscal measures. In this paper, we build on the framework introduced in \cite{ceci} and study the debt reduction problem by incorporating a cost term in the objective functional, which may generate either adverse or beneficial effects depending on whether the policy is contractionary or expansionary. For the special case of a linear impact of fiscal policy on GDP growth, we derive an explicit solution using the Hamilton–Jacobi–Bellman approach. The presence of fiscal policy costs leads to a threshold-type policy: when the debt-to-GDP ratio is below the threshold, the government adopts the maximum allowed deficit-to-debt ratio, whereas when it exceeds the threshold, the government implements the maximum allowed surplus-to-debt ratio.
The paper is organized as follows. In Section 2, we introduce the model framework, the problem formulation, and prove the key properties of the value function for a general cost function. In Section 3, we specify the form of the cost function, addressing the optimal reduction of the debt-to-GDP ratio problem via the HJB equation approach. In particular, we focus on the special case of a linear impact of fiscal policy on GDP growth, and derive explicit expressions for both the value function and the optimal fiscal strategy.
Finally, in Section 4, we conduct a sensitivity analysis to reinforce our theoretical findings and provide an economic interpretation. Some technical proofs are presented in the Appendix.

\section{Model Framework and Problem Formulation}
Consider a complete probability space $(\Omega, \mathcal{F}, \mathbb{P}, \mathbb{F})$ endowed with a complete and right-continuous filtration $\mathbb{F} := \{\mathcal{F}_t\}_{t \geq 0}$.
 
We start with a model similar to the one proposed in \cite{ceci}. Precisely, the debt-to-GDP ratio $X^{u,x}=\{X_t^{u,x}\}_{t \geq0 }$ is controlled via a fiscal policy $u=\{u_t\}_{t \geq 0}$ 
and satisfied the following stochastic differential equation (SDE)
\begin{equation}
\label{SDE1}
    dX_{t}^{x,u} = X_{t}^{x,u}\left[ \left( r - g(u_t) - u_t \right) dt + \sigma \, dW_t \right], \quad X_{0}^{x,u} = x ,
\end{equation}
where $x > 0$ is the initial debt-to-GDP ratio, $r>0$ denotes the real interest rate on debt, $\sigma>0$ is the volatility, $g(u_t)$ is the GDP growth rate affected by the fiscal policy applied and  $\{W_t\}_{t \geq 0}$ is a standard Brownian motion. 
Unlike the majority of studies in the related literature, such as
\cite{cade1}, \cite{cade2}, \cite{callceci}, \cite{sicon2020}, here the GDP growth rate is influenced by the fiscal policy. 


\begin{remark}
The dynamics in Equation \eqref{SDE1} can be justified as follows. Under a fiscal policy $\{u_t\}_{t\geq 0}$ expressed in term of the debt, the nominal debt $D_t$ evolves at time $t \geq 0$ at rate
$r - u_t$, that is,
\[
dD_t = (r - u_t) D_t \, dt.
\]
At the same time, we assume that the GDP, denoted by the stochastic process $\{\psi_t\}_{t\geq 0}$, follows the stochastic
differential equation
\[
d\psi_t = g(u_t) \psi_t \, dt + \sigma \psi_t \, dB_t,
\]
for some Brownian motion $\{B_t\}_{t\geq 0}$. An application of It\^o's formula to the ratio
$X^{x,u}_t := D_t / \psi_t$ and a change of probability measure yield to the SDE in \eqref{SDE1}.   
 
 Moreover, the dynamic in \eqref{SDE1} can be seen as a stochastic version of the one proposed in classical macroeconomic literature 
 see, e.g. \cite{dornfisher} or \cite{blanch}, where, in addition, uncertainty in the model is introduced. 
However, unlike the classical case where the GDP growth rate is assumed constant, it is here influenced by the fiscal policy.
\end{remark}

We consider the problem faced by a government that aims to determine an optimal fiscal policy for managing the sovereign debt-to-GDP ratio. To this end, we introduce a class of admissible fiscal strategies.

\begin{definition}\label{Admissible}
(\textbf{Admissible fiscal policies}) We denote by $\mathcal{U}$ the family of all the $\mathbb{F}$-predictable and $[-U_1,U_2]$- valued processes $\{u_t\}_{t \geq0}$. Where, $U_1 > 0$ denotes the maximum allowed deficit-to-debt ratio and $U_2 > 0$ is the maximum surplus-to-debt ratio.
\end{definition}
Our goal is to minimize the following objective function, over the class of admissible controls $\mathcal{U}$
\begin{equation}
\label{costfun}
    J(x,u)=\mathbb{E}\bigg[\int_0^{+\infty}e^{-\lambda t}f(X_t^{u,x},u_t) dt \bigg ], \quad x\in(0,+\infty),
\end{equation}
where $\lambda>0$ is the government discounting factor and $f:(0,+\infty)\times[-U_1,U_2] \xrightarrow{}\mathbb{R}$ is the cost function.
From now on we make the following assumption.

\begin{assumption}\label{Ass1}
    The GDP growth rate, $g : [-U_1, U_2] \to \mathbb{R}$, satisfies 
    \[
        \bar{g}_1 \leq g(u) \leq \bar{g}_2 \quad \forall  u \in [-U_1, U_2],
    \]
    for some suitable constants $\bar{g}_1 < 0 < \bar{g}_2$.
\end{assumption}

The SDE in equation \eqref{SDE1} has the following explicit solution

\begin{equation}
\label{solexpl}
X^{u,x}_t = x e^{\int_0^t \left(r - g(u_s) - u_s \right) ds - \frac{1}{2} \sigma^2 t + \sigma W_t}, \quad \forall t \geq 0, \; \mathbb{P}\text{-a.s.}
\end{equation}
In particular, without any fiscal intervention we get that 
\begin{equation}\label{solexpl0}
X^{0,x}_t = x e^{(r - g(0)) t} e^{- \frac{1}{2} \sigma^2 t + \sigma W_t}, \quad \forall t \geq 0, \; \mathbb{P}\text{-a.s.}
\end{equation}
and when $r>g(0)$ the debt-to-GDP explodes, because $$\lim_{t\to + \infty}{\mathbb{E}}[X^{0,x}_t]= \lim_{t\to + \infty} x e^{(r - g(0)) t} = + \infty,$$
this result is consistent with classical results in economics literature and motivates the discussion of the debt-to-GDP reduction problem, which we will address in the sequel.

Under Assumption \ref{Ass1}, for any admissible fiscal policiy $\{u_t\}_{t\geq 0}$ and \( m > 0 \), the following inequalities hold
\begin{equation}
\begin{split}
x^m e^{ -m(\bar{g}_2 + U_2) t - \frac{1}{2} m \sigma^2 t + m \sigma W_t }  \leq \left( X^{u,x}_t \right)^m 
\leq x^m e^{m(R - \bar{g}_1 + U_1) t - \frac{1}{2} m \sigma^2 t + m \sigma W_t }, \quad \forall t \geq 0, \; \mathbb{P}\text{-a.s.}
\end{split}
\label{eq:bounds}
\end{equation}
As a consequence,  taking into account that $ {\mathbb{E}}[e^{m\sigma W_t}]=e^{\frac{1}{2}m^2\sigma^2t}$, we obtain the following upper bound for the expected value of the controlled debt-to-GDP ratio
\begin{equation}\ 
\mathbb{E} \left[ \left( X^{u,x}_t \right)^m \right] \leq x^m e^{\lambda_m t}, \quad \forall t \geq 0,
\label{eq:expectation_bound}
\end{equation}
where 
\begin{equation}
\lambda_m := m(r - \bar{g}_1 + U_1) + \frac{1}{2} m(m - 1) \sigma^2.
\label{eq:lambda_def}
\end{equation}
We now impose some assumptions on the cost function proposed in Eq. \eqref{costfun} and on the discount factor $\lambda$.

\begin{assumption}
    
We assume that
\label{assumption1}
\begin{itemize}
    \item [(i)] there exist constants \( C > 0 \) and \( m \geq 1 \) such that
    \[
    {|f(x, u)| }\leq C (1 + x^m), \quad \forall (x,  u) \in (0, +\infty)  \times [-U_1, U_2];
    \]
    \item [(ii)] \( \lambda > \lambda_m \).
\end{itemize}
\end{assumption} 
Condition $(ii)$ is consistent with the real world, where governments are by far more concerned about the present than
the future. Indeed, the larger the $\lambda$, the more concerned the government is about the present
costs than about the future costs.

As usual in stochastic control framework, the government’s problem can be formalized by introducing the associated value function
\begin{equation}  
\label{valuefun}
v(x) = \inf_{u \in \mathcal{U}} J(x,u), \quad x \in (0,+\infty).
\end{equation}
This function is finite as proved below.
\begin{proposition}
For every admissible strategy \( u \in \mathcal{U} \) and  for any $x \in (0,+\infty)$ we have that $|J(x, u)| < +\infty$.
\end{proposition}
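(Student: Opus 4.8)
The plan is to bound the cost functional directly using the polynomial growth of $f$ from Assumption \ref{assumption1}(i) together with the moment estimate \eqref{eq:expectation_bound}, and then to exploit the discounting condition $\lambda > \lambda_m$ to get integrability in time.

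First I would note that, since $f$ is (jointly) measurable and $\{u_t\}_{t\ge 0}$ is $\mathbb{F}$-predictable while $\{X^{u,x}_t\}_{t\ge 0}$ has continuous (hence progressively measurable) paths by \eqref{solexpl}, the map $(t,\omega)\mapsto f(X^{u,x}_t(\omega),u_t(\omega))$ is jointly measurable. Then, by Tonelli's theorem applied to the nonnegative integrand $e^{-\lambda t}\,|f(X^{u,x}_t,u_t)|$, one has
\[
|J(x,u)| \;\le\; \mathbb{E}\!\left[\int_0^{+\infty} e^{-\lambda t}\,|f(X^{u,x}_t,u_t)|\,dt\right] \;=\; \int_0^{+\infty} e^{-\lambda t}\,\mathbb{E}\big[|f(X^{u,x}_t,u_t)|\big]\,dt .
\]

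Next I would insert the growth bound $|f(X^{u,x}_t,u_t)|\le C\big(1+(X^{u,x}_t)^m\big)$ from Assumption \ref{assumption1}(i) and then the moment estimate \eqref{eq:expectation_bound}, which gives
\[
\mathbb{E}\big[|f(X^{u,x}_t,u_t)|\big] \;\le\; C\big(1+\mathbb{E}[(X^{u,x}_t)^m]\big) \;\le\; C\big(1+x^m e^{\lambda_m t}\big).
\]
Plugging this into the time integral yields
\[
|J(x,u)| \;\le\; C\int_0^{+\infty} e^{-\lambda t}\,dt \;+\; C x^m\int_0^{+\infty} e^{-(\lambda-\lambda_m)t}\,dt \;=\; \frac{C}{\lambda} \;+\; \frac{C x^m}{\lambda-\lambda_m} \;<\; +\infty,
\]
where finiteness of the first term uses $\lambda>0$ and finiteness of the second uses Assumption \ref{assumption1}(ii), i.e. $\lambda>\lambda_m$. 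This completes the argument for every $x\in(0,+\infty)$ and every $u\in\mathcal{U}$.

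There is no substantial obstacle here; the only point requiring a little care is the justification of the interchange of expectation and time integral, which is handled cleanly by Tonelli once joint measurability of the integrand is observed. Everything else is a direct combination of the two parts of Assumption \ref{assumption1} with the already-established bound \eqref{eq:expectation_bound}.
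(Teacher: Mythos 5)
Your proof is correct and follows essentially the same route as the paper's: bound $|f|$ via the polynomial growth condition, apply the moment estimate \eqref{eq:expectation_bound}, and integrate the resulting exponentials using $\lambda>0$ and $\lambda>\lambda_m$. The only difference is that you make the Tonelli/measurability step explicit, which the paper leaves implicit; this is a welcome but inessential refinement.
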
 
\begin{proof}
    
Under Assumptions \ref{Ass1} and \ref{assumption1}, and recalling estimate in Eq.\eqref{eq:expectation_bound}, for any \( u \in \mathcal{U} \) and $x \in (0,+\infty)$, we have
\begin{equation}
\begin{split}
    |J(x, u)| \leq 
&\mathbb{E} \left[ \int_0^{\infty} e^{-\lambda t} |f(X_t^{x,u}, u_t)| \, dt \right]\\
&\leq C \, \mathbb{E} \left[ \int_0^{\infty} e^{-\lambda t} \left(1 + (X_t^{x,u})^m \right) dt \right]\\
&\leq C \left[ \int_0^{\infty} e^{-\lambda t} dt + x^m \int_0^{\infty} e^{-(\lambda - \lambda_m)t} dt \right]\\
&=\frac{C x^m}{\lambda - \lambda_m} + \frac{C}{\lambda} < +\infty.
\end{split}
\end{equation}
\end{proof} 
We now  focus on the case where the cost function has {the following form \begin{equation}\label{h} f(x,u)=h(x)+ku
\end{equation}} with $h:(0,+\infty) \rightarrow(0,+\infty)$ such that $h(x)\leq C(1+x^m)$, with $C$ and and 
$k$ positive constants. Clearly, $f(x,u)$ in \eqref{h} satisfies the Assumption \ref{assumption1}.


With this choice, the value function in Eq. \eqref{valuefun} reads as
\begin{equation}\label{P1}
v(x)=\inf_{u\in\mathcal{U}}\mathbb{E} \bigg[\int^\infty_0e^{-\lambda t}h(X_t^{u,x}) dt + k \int^\infty_0 e^{-\lambda t}u_t dt\bigg],
\end{equation}

\begin{remark}\label{etaxi}
The second term inside the expectations \eqref{P1} refers as a cost when $u_t>0$, conversely,  when $u_t<0$ it is considered as a benefit. We can justify  this behavior macro-economically because when the fiscal policy tightens (i.e. is more tax heavy) creates a cost in terms of citizens' discontent, instead when the fiscal policy is more expansive (in the sense that it creates deficit instead of taxation to finance the public expenditure) generates a benefit in term of citizens' content. The same considerations are made in \cite{sicon2020}, where a singular stochastic control model is discussed and the government's actions are described via the pair $\{\eta_t,\xi_t \}_{t\geq 0}$, where $\eta_t$ is the cumulative amount made up to time $t \geq 0 $, of a generic tighten fiscal policy, such as spending cuts or increased taxation, with the objective of reducing the debt-to-GDP ratio and $\xi_t$ is the cumulative amount of a generic expansionary policy such as public investments, that creates deficit, made up to time $t \geq 0$.
\end{remark}

Now we prove some properties of the value function $v(x)$.

\begin{proposition}
\label{p31}
The value function satisfies the following properties
\begin{itemize}
    \item [(i)]  $v(x)\geq-\frac{kU_1}{\lambda}$ and  
    $lim_{x\to 0^+} v(x) \geq -\frac{kU_1}{\lambda}.$
    
    \item [(ii)] $\exists M>0: v(x)\leq M(1+x^m)$.
    \end{itemize} 
If in addition $h(x)$ is increasing:
\begin{itemize}
    \item [(iii)] $v(x)$ is increasing;
    \end{itemize}
    If in addition $h(x)$ is convex:
    \begin{itemize}
    \item [(iv)] $v(x)$ is a convex function.
\end{itemize}
\end{proposition}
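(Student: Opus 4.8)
The plan is to handle items (i)--(iii) by elementary pathwise estimates and item (iv) by a control--mixing argument. Throughout write $\Xi^{u}_t:=\exp\{\int_0^t(r-g(u_s)-u_s)\,ds-\tfrac12\sigma^2 t+\sigma W_t\}$, so that by \eqref{solexpl} one has $X^{u,x}_t=x\,\Xi^{u}_t$ with $\Xi^{u}_t>0$ and independent of $x$; this scaling identity is what makes (iii) and (iv) work. For (i): since $h>0$ and every admissible $u$ satisfies $u_t\ge -U_1$, we get $J(x,u)\ge 0+k\,\mathbb{E}[\int_0^\infty e^{-\lambda t}(-U_1)\,dt]=-kU_1/\lambda$; taking the infimum over $\mathcal{U}$ gives $v(x)\ge -kU_1/\lambda$ for every $x$, a bound that is uniform in $x$ and hence survives the limit $x\to0^+$. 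For (ii): the constant policy $u\equiv 0$ is admissible, so $v(x)\le J(x,0)=\mathbb{E}[\int_0^\infty e^{-\lambda t}h(X^{0,x}_t)\,dt]$; inserting $h(y)\le C(1+y^m)$ and the moment bound \eqref{eq:expectation_bound}, exactly as in the proof of the preceding Proposition, yields $v(x)\le \frac{C}{\lambda}+\frac{Cx^m}{\lambda-\lambda_m}\le M(1+x^m)$ with $M:=C\max\{1/\lambda,\,1/(\lambda-\lambda_m)\}$. For (iii): for $0<x_1\le x_2$ and any $u\in\mathcal{U}$, the identity $X^{u,x_i}_t=x_i\Xi^{u}_t$ gives $X^{u,x_1}_t\le X^{u,x_2}_t$ pathwise, hence $h(X^{u,x_1}_t)\le h(X^{u,x_2}_t)$ when $h$ is increasing; since the linear cost $k\int_0^\infty e^{-\lambda t}u_t\,dt$ does not depend on $x$, we obtain $J(x_1,u)\le J(x_2,u)$ for every $u$, and passing to infima gives $v(x_1)\le v(x_2)$.

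For (iv): observe first that, for fixed $u$, $x\mapsto J(x,u)$ is convex, being the integral against $e^{-\lambda t}\,dt$ and $\mathbb{E}$ of the convex function $x\mapsto h(x\Xi^{u}_t)$ plus an $x$-independent term; but the infimum over $u$ of convex functions need not be convex, so a mixing construction is required. Fix $\beta\in(0,1)$, $x_1,x_2>0$, set $\bar x:=\beta x_1+(1-\beta)x_2$, fix $\varepsilon>0$, and choose $u^{1},u^{2}\in\mathcal{U}$ with $J(x_i,u^{i})\le v(x_i)+\varepsilon$. Put $Y_t:=\beta X^{u^{1},x_1}_t+(1-\beta)X^{u^{2},x_2}_t>0$ and $\theta_t:=\beta X^{u^{1},x_1}_t/Y_t\in[0,1]$. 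Because the diffusion coefficient of \eqref{SDE1} is linear in the state, $Y$ solves $dY_t=Y_t[\,r-(\theta_t(g(u^{1}_t)+u^{1}_t)+(1-\theta_t)(g(u^{2}_t)+u^{2}_t))\,]\,dt+\sigma Y_t\,dW_t$ with $Y_0=\bar x$. One then seeks an admissible $\tilde u$ with $X^{\tilde u,\bar x}=Y$, i.e.\ with $g(\tilde u_t)+\tilde u_t$ equal to the state--weighted average appearing in this drift; since that average lies between $g(u^{1}_t)+u^{1}_t$ and $g(u^{2}_t)+u^{2}_t$, continuity of $g$ (automatic in the linear case of Section 3) ensures such a value is attained by some point of $[-U_1,U_2]$, and a measurable selection produces a predictable $[-U_1,U_2]$--valued $\tilde u$; pathwise uniqueness for \eqref{SDE1} then forces $X^{\tilde u,\bar x}=Y$ (when $g$ is affine one may simply take $\tilde u_t=\theta_t u^{1}_t+(1-\theta_t)u^{2}_t$). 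Convexity of $h$ gives $h(Y_t)\le\beta\,h(X^{u^{1},x_1}_t)+(1-\beta)\,h(X^{u^{2},x_2}_t)$ pathwise, so the running--cost part of $J(\bar x,\tilde u)$ is dominated by $\beta\,\mathbb{E}[\int_0^\infty e^{-\lambda t}h(X^{u^{1},x_1}_t)dt]+(1-\beta)\,\mathbb{E}[\int_0^\infty e^{-\lambda t}h(X^{u^{2},x_2}_t)dt]$; provided also $k\,\mathbb{E}[\int_0^\infty e^{-\lambda t}\tilde u_t\,dt]\le\beta k\,\mathbb{E}[\int_0^\infty e^{-\lambda t}u^{1}_t\,dt]+(1-\beta)k\,\mathbb{E}[\int_0^\infty e^{-\lambda t}u^{2}_t\,dt]$, we get $J(\bar x,\tilde u)\le\beta J(x_1,u^{1})+(1-\beta)J(x_2,u^{2})\le\beta v(x_1)+(1-\beta)v(x_2)+\varepsilon$, and $\varepsilon\downarrow 0$ finishes.

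The main obstacle is exactly the last inequality, i.e.\ the linear cost term: with the natural choice $\tilde u_t=\theta_t u^{1}_t+(1-\theta_t)u^{2}_t$ one computes $\tilde u_t-(\beta u^{1}_t+(1-\beta)u^{2}_t)=\beta(1-\beta)\,(X^{u^{1},x_1}_t-X^{u^{2},x_2}_t)(u^{1}_t-u^{2}_t)/Y_t$, whose sign is not controlled in general, so closing the gap calls for extra structure (affinity or convexity of $g$, a monotonicity or feedback reduction of the near--optimal controls, or a cleverer choice of $\tilde u$); I expect this to be the delicate step, with the measurable--selection/admissibility check for $\tilde u$ as a secondary technical point and (i)--(iii) being routine.
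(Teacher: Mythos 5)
Your treatment of (i)--(iii) is correct and is in substance the paper's own argument: (i) follows from $h>0$ together with $u_t\ge -U_1$ (the paper phrases the same estimate through the shifted cost $\bar f=f+kU_1\ge 0$ and the shifted value function $\bar v=v+\tfrac{kU_1}{\lambda}$), (ii) uses the null control, the growth condition on $h$ and the moment bound \eqref{eq:expectation_bound}, and (iii) uses the pathwise comparison $X^{u,x}_t\le X^{u,x'}_t$ for a common control coming from \eqref{solexpl}, plus the fact that the linear cost term does not depend on $x$. Nothing to change there.

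For (iv) your proposal has a genuine gap, which you yourself flag: in the mixing construction the running cost $h$ is handled by convexity and the bilinear structure of the dynamics, but the linear term $k\int_0^\infty e^{-\lambda t}\tilde u_t\,dt$, with $\tilde u_t=\theta_t u^{1}_t+(1-\theta_t)u^{2}_t$ and $\theta_t=\beta X^{u^{1},x_1}_t/Y_t$, differs from $\beta k\int_0^\infty e^{-\lambda t}u^{1}_t\,dt+(1-\beta)k\int_0^\infty e^{-\lambda t}u^{2}_t\,dt$ by a term whose sign is not controlled, exactly as your computation shows; hence $J(\bar x,\tilde u)\le\beta J(x_1,u^{1})+(1-\beta)J(x_2,u^{2})$ is not obtained and the convexity proof is not complete. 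The measurable-selection/admissibility of $\tilde u$ is a secondary point, as you say, but the cost-term inequality is essential and missing.

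It is worth noting that the obstacle you isolated is precisely where the paper's own proof of (iv) is most delicate. The paper avoids mixing: it keeps the same control for both initial points, uses $X^{u,x_\theta}_t=\theta X^{u,x}_t+(1-\theta)X^{u,x'}_t$ to get convexity of $J(\cdot,u)$ for each fixed $u$, and then concludes through the inequality $\inf_{u}\big(\theta J(x,u)+(1-\theta)J(x',u)\big)\le\theta\inf_{u}J(x,u)+(1-\theta)\inf_{u}J(x',u)$. As you correctly observe, a pointwise infimum of convex functions need not be convex, and that last inequality does not hold for general families (the infimum of a sum dominates the sum of the infima); it is valid when a single control is simultaneously (near-)optimal for both $x$ and $x'$ --- e.g.\ in the regime $\alpha>1$, where the optimizer is the constant $-U_1$ --- but in the threshold regime $0<\alpha<1$ the optimal control depends on the state, so it cannot be invoked a priori. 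So your instinct that extra structure is needed is sound; natural repairs are either to establish convexity ex post from the explicit candidate of Section~3 via the verification theorem (checking directly that the closed-form $v$ is convex), or to close the mixing argument exploiting the affine form $g(u)=g_0-\alpha u$ with a smarter choice of $\tilde u$. As written, your argument leaves (iv) open, and the paper's one-line inf-swap does not by itself close it either.
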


\begin{proof}

We introduce $\bar{f}(x,u)=h(x) +ku+kU_1$ and the value function associated to $\bar{f}$
\begin{equation}
\label{aux}
    \bar{v}(x):=\inf_{u \in \mathcal{U}}\mathbb{E}\bigg[\int_0^{+\infty}e^{-\lambda t}(h(X^{u,x}_t)+ k u_t)dt+\int_0^{+\infty}e^{-\lambda t}kU_1dt\bigg].
\end{equation}
We observe that
\begin{equation}
\label{vbarra}
    \bar{v}(x)=v(x)+kU_1(-\frac{e^{-\lambda t}}{\lambda})\bigg|^{+\infty}_0=v(x)+\frac{kU_1}{\lambda}.
\end{equation}
The point $(i)$ is clearly demonstrated because $\bar f\geq 0$ and so $\bar{v}(x)\geq0$ and by the Eq. \eqref{vbarra}  we get  that  $v(x)\geq-\frac{kU_1}{\lambda}$.  Moreover, taking the limit for $x\to 0^+$, point $(i)$ is proved.

 
To prove $(ii)$ we choose in Eq. \eqref{aux}, on the right hand, the null control $u_t \equiv 0$, then by the growth condition $h(x)\leq C(1+x^m)$ and since $f(x,0)=h(x)$ and $\bar{f}(x,0)=h(x)+kU_1$ we have
\begin{equation}
\begin{split}
    \bar{v}(x)&\leq \mathbb{E}\bigg [\int^{+\infty}_0e^{-\lambda t}h(X^{0,x}_t) dt+\frac{kU_1}{\lambda}\bigg]\\
    &\leq \mathbb{E}\bigg [\int^{+\infty}_0e^{-\lambda t}C(1+X^{0,x}_t)^m dt+\frac{kU_1}{\lambda}\bigg].
    \end{split}
\end{equation}
The last term reads as
\[
\mathbb{E}\bigg [\int^{+\infty}_0e^{-\lambda t}C(1+X^{0,x}_t)^m dt+\frac{kU_1}{\lambda}\bigg]= C \mathbb{E}\big [\int^{+\infty}_0e^{-\lambda t}(X^{0,x}_t)^m dt\big]+\frac{kU_1 +C}{\lambda}.
\]
then, from  \eqref{eq:expectation_bound}, we have
\[
\bar{v}(x)\leq\int^{+\infty}_0e^{-\lambda t}Cx^me^{\lambda_mt} dt+\frac{kU_1 +C}{\lambda},
\]
\[
 \bar{v}(x)\leq Cx^m \frac{e^{(\lambda_m-\lambda)t}}{\lambda_ m-\lambda}\bigg |^{+\infty}_0+\frac{kU_1+C}{\lambda},
\]
and so
\begin{equation}
   0 \leq\bar{v}(x)\leq\frac{C}{\lambda-\lambda _m}x^m+ \frac{kU_1+C}{\lambda}.
\end{equation}
Thus, by Eq. \eqref{vbarra} we have that $-\frac{kU_1}{\lambda} \leq v(x) \leq\frac{C}{\lambda-\lambda_m} x^m+\frac{C}{\lambda}$, and so for a suitable $M>0$, $v(x)\leq M(1+x^m)$.

Now, we need to prove the increasing property of $v(x)$, that is $(iii)$.
 Let us take $0 < x \leq x'$. By Eq. \eqref{solexpl}, we observe that for all $u \in \mathcal{U}$,
\[
X_{t}^{x,u} \leq X_t^{u,x'} \quad \forall t \geq 0 \quad \mathbb{P}\text{-a.s.}
\]
and so {using the monotonicity of $h$},  
it follows that
\[
\mathbb{E} \left[ \int_0^{+\infty} e^{-\lambda t} [h(X_t^{u,x})+ku_t] \, dt \right]
\leq 
\mathbb{E} \left[ \int_0^{+\infty} e^{-\lambda t} [h(X_t^{u,x'})+ku_t]\, dt \right]
\quad \forall u \in \mathcal{U};
\]
taking the infimum over $\mathcal{U}$ on both sides completes the proof.

We now prove $(iv)$. Let $x,x'>0$
 and $\theta\in [0,1].$ Defining $x_\theta =\theta x + (1-\theta)x'$, we have that from \eqref{solexpl}, for any $u=\{u_t\}_{t\in [0,T]} \in \mathcal{U}$,  $X^{u,x_\theta}_t =\theta X^{u,x}_t +(1-\theta)X^{u,x'}_t $, $t \in [0,T]$.
 The convexity of $h(x)$ implies that for any $u \in \mathcal{U}$
 \begin{equation}
h(X^{u,x_\theta}_t) \leq 
\theta h(X^{u,x}_t) + (1-\theta) h(X^{u,x'}_t)  
\end{equation}
and as consequence $J(x, u)$ is convex in $x$, that is \begin{equation}\label{ineq:Jconv}
J(\theta x+(1-\theta)x',u)
\le
\theta J(x,u)+(1-\theta)J(x',u).
\end{equation}
Taking the minimum  on  $u\in\mathcal{U}$ in both sides we get
\begin{equation}\label{ineq:inf1}
v(\theta x+(1-\theta)x')
=
\inf_{u\in\mathcal{U}} J(\theta x+(1-\theta)x',u)
\le
\inf_{u\in\mathcal{U}}\big(
\theta J(x,u)+(1-\theta)J(x',u)
\big).
\end{equation}
Finally, since 
\[
\inf_{u\in\mathcal{U}}\big(
\theta J(x,u)+(1-\theta)J(x',u)
\big)
\le
\theta \inf_{u\in\mathcal{U}}J(x,u)
+(1-\theta)\inf_{u\in\mathcal{U}}J(x',u),
\]
we get 
\begin{equation}\label{ineq:inf2}
v(\theta x+(1-\theta)x')
\le
\theta v(x)+(1-\theta)v(x'),
\end{equation}
which concludes the proof.
 
\end{proof}

\section{The Hamilton-Jacobi-Belman equation}
It is well known that when the value function $v(x)$ is sufficiently smooth, is expected to solve the following HJB equation (see for instance, \cite{FS} and \cite{pham} for more details)
\begin{equation}
\label{hjb2}
    \inf_{u\in[-U_1,U_2]}[\mathcal{L}^uv(x)+f(x,u)-\lambda v(x)]=0,
\end{equation}
where $\mathcal{L}^u$ denotes the Markov generator of $X^u$, which, for any constant control $u \in [-U_1, U_2]$, is given by the following differential operator
\begin{equation}
\mathcal{L}^u \phi(x) = x \left[ r - g(u) - u \right]  \phi'(x)
+ \frac{1}{2} \sigma^2 x^2 \phi''(x),
\end{equation}
{with $\phi \in \mathcal{C}^{2}(0, + \infty)$.}

Now we provide a verification theorem based on the classical solution to the HJB \eqref{hjb2}

\begin{theorem}\label{VT}
    Let $w: (0, +\infty) \to \mathbb{R}$ be a function in $\mathcal{C}^2((0, +\infty))$ and suppose
that there exists a constant $C_1 > 0$ such that

\[
|w(x)| \leq C_1 (1 + x^m) \quad \forall x \in (0, +\infty).
\]
If
\[
\inf_{u \in [-U_1, U_2]} \left\{ \mathcal{L}^u w(x) + f(x, u) - \lambda w(x) \right\} \geq 0 \quad \forall x \in (0, +\infty),
\]
then $w(x) \leq v(x)$ for all $x \in (0, +\infty)$.

Now suppose that there exists a measurable function $u^*(x)$ with values in $[-U_1, U_2]$ such that
\[
\inf_{u \in [-U_1, U_2]} \left\{ \mathcal{L}^u w(x) + f(x, u) - \lambda w(x) \right\}
= \mathcal{L}^{u^*} w(x) + f(x, u^*(x)) - \lambda w(x) = 0 \quad \forall x \in (0, +\infty).
\]
Then $w(x) = v(x)$ for all $x \in (0, +\infty)$ and $u^* = \{u^*(X^{u^*, x}_t)\}_{t \geq 0}$ is an optimal (Markovian) control. \end{theorem}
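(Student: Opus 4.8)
The plan is to prove the two halves of the verification theorem separately, both via an application of Itô's formula to the process $t\mapsto e^{-\lambda t}w(X^{u,x}_t)$ combined with a localization argument and the polynomial growth bound on $w$. Fix $x\in(0,+\infty)$ and an arbitrary admissible control $u\in\mathcal{U}$. Since $w\in\mathcal{C}^2((0,+\infty))$, Itô's formula gives, for any $T>0$ and any localizing sequence of stopping times $\tau_n\uparrow+\infty$ (introduced so as to keep $X^{u,x}$ inside a compact subset of $(0,+\infty)$ and to control the stochastic integral),
\begin{equation}
e^{-\lambda(T\wedge\tau_n)}w(X^{u,x}_{T\wedge\tau_n})
= w(x) + \int_0^{T\wedge\tau_n} e^{-\lambda s}\big(\mathcal{L}^{u_s}w(X^{u,x}_s)-\lambda w(X^{u,x}_s)\big)\,ds + M_{T\wedge\tau_n},
\end{equation}
where $M$ is the local martingale $\int_0^\cdot e^{-\lambda s}\sigma X^{u,x}_s w'(X^{u,x}_s)\,dW_s$, which has zero expectation at the stopped time $T\wedge\tau_n$. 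For the first (lower bound) part, the hypothesis $\mathcal{L}^u w + f(\cdot,u) - \lambda w\ge 0$ for every $u\in[-U_1,U_2]$ lets me replace the integrand by $-e^{-\lambda s}f(X^{u,x}_s,u_s)$ to obtain an inequality, take expectations (killing $M$), and rearrange to get $w(x) \le \mathbb{E}\big[e^{-\lambda(T\wedge\tau_n)}w(X^{u,x}_{T\wedge\tau_n})\big] + \mathbb{E}\big[\int_0^{T\wedge\tau_n}e^{-\lambda s}f(X^{u,x}_s,u_s)\,ds\big]$.

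The next step is to pass to the limit $n\to\infty$ and then $T\to\infty$. The boundary term $\mathbb{E}[e^{-\lambda(T\wedge\tau_n)}w(X^{u,x}_{T\wedge\tau_n})]$ is handled using $|w(y)|\le C_1(1+y^m)$ together with the bound \eqref{eq:expectation_bound}: $\mathbb{E}[e^{-\lambda(T\wedge\tau_n)}|w(X^{u,x}_{T\wedge\tau_n})|]\le C_1\mathbb{E}[e^{-\lambda(T\wedge\tau_n)}(1+(X^{u,x}_{T\wedge\tau_n})^m)]$, and a stopping/uniform-integrability argument (or an application of the optional sampling estimate to the supermartingale-type bound $e^{-\lambda t}(1+(X^{u,x}_t)^m)$ after using that $\lambda>\lambda_m$) shows this is bounded and tends to $0$ as $T\to\infty$. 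For the integral term, the integrand is dominated in absolute value by $Ce^{-\lambda s}(1+(X^{u,x}_s)^m)$, which is integrable on $(0,+\infty)$ by Assumption \ref{assumption1} and the computation already performed in the finiteness proposition, so dominated convergence yields $w(x)\le \mathbb{E}\big[\int_0^\infty e^{-\lambda s}f(X^{u,x}_s,u_s)\,ds\big] = J(x,u)$. Since $u\in\mathcal{U}$ was arbitrary, taking the infimum gives $w(x)\le v(x)$.

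For the second part, suppose the measurable selector $u^*(x)\in[-U_1,U_2]$ attains the infimum and makes the HJB expression vanish identically. First I would invoke a standard existence/uniqueness result for the SDE \eqref{SDE1} with the Markovian feedback $u_t=u^*(X_t)$ to get a solution $X^{u^*,x}$ and hence an admissible control $u^*=\{u^*(X^{u^*,x}_t)\}_{t\ge0}\in\mathcal{U}$ (predictability follows from measurability of $u^*$ and continuity of the state). Running the same Itô computation along this particular control, the integrand $\mathcal{L}^{u^*}w-\lambda w + f(\cdot,u^*)$ is now identically zero, so all inequalities above become equalities, and the same limiting arguments give $w(x)=J(x,u^*)\ge v(x)$. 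Combined with $w\le v$ from the first part, this yields $w(x)=v(x)=J(x,u^*)$, so $u^*$ is optimal.

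The main obstacle is the careful justification of the limit passage in the boundary term $\mathbb{E}[e^{-\lambda(T\wedge\tau_n)}w(X^{u,x}_{T\wedge\tau_n})]$: one must argue simultaneously that the localization can be removed ($n\to\infty$) and that the horizon term vanishes ($T\to\infty$), using only the polynomial growth of $w$, the moment bound \eqref{eq:expectation_bound}, and the condition $\lambda>\lambda_m$. The clean way is to first show $\{e^{-\lambda(T\wedge\tau_n)}w(X^{u,x}_{T\wedge\tau_n})\}_n$ is uniformly integrable for fixed $T$ (e.g. via an $L^p$ bound with $p>1$ obtained from \eqref{eq:bounds} and choosing the localization to also control exponential moments), pass $n\to\infty$ to replace $T\wedge\tau_n$ by $T$, and then let $T\to\infty$ using $\mathbb{E}[e^{-\lambda T}(X^{u,x}_T)^m]\le x^m e^{(\lambda_m-\lambda)T}\to 0$. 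Everything else — the Itô expansion, the sign manipulation from the HJB inequality, and the dominated convergence for the running-cost integral — is routine given the estimates already established in the excerpt.
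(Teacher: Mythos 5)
Your argument is correct and is essentially the same one the paper relies on: the paper does not spell out a proof but defers to Theorem 4.2 of \cite{ceci}, which is exactly this standard verification scheme (It\^o's formula applied to $e^{-\lambda t}w(X^{u,x}_t)$, localization, the HJB inequality, and removal of the localization/horizon via the polynomial growth of $w$, the moment bound \eqref{eq:expectation_bound} and $\lambda>\lambda_m$). Your treatment of the possibly negative cost through the bound $|f(x,u)|\leq C(1+x^m)$ is precisely the modification the paper alludes to when it notes that in \cite{ceci} the cost is assumed nonnegative, so no further changes are needed.
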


\begin{proof}
 The proof is similar to that of Theorem 4.2 in  \cite{ceci}, with the difference that in \cite{ceci} the cost function $f(x,u)$ is assumed to be nonnegative.
 \end{proof}

We now  focus on the debt reduction problem and so, from now on, we take the cost function as \eqref{h} with $h(x)= C x^m$ with $m\geq 2$ and $C>0$. This disutility function is commonly employed in economic models (see, for instance, \cite{cade2} and \cite{cade1}). The parameter $m \geq 2$ governs the curvature of the function and reflects the degree of aversion to high debt levels: larger values of $m$ imply that the marginal cost of debt rises more sharply as debt increases. The constant $C > 0$ captures the weight the government places on debt minimization relative to other policy objectives.

Using a convex functional form such as $Cx^m$ is particularly appropriate in public debt reduction problems, as it ensures that the cost associated with holding debt increases non-linearly. This setup aligns with the idea that high debt levels may trigger disproportionate economic risks, such as loss of market confidence, rising interest rates, or constraints on fiscal flexibility, making aggressive debt accumulation increasingly costly. The convexity of the disutility function thus creates a strong incentive for the government to contain debt growth, especially as it rises beyond certain thresholds.

Then the HJB in \eqref{hjb2} in this case reads as
 \begin{equation}
   \label{hjb4}  
  \inf_{u\in[-U_1,U_2]}[\mathcal{L}^uv(x)+Cx^m+ku-\lambda v(x)]=0.
 \end{equation}

In the sequel, in order to provide an explicit solution, we study a setting in which fiscal policy exerts a linear influence on GDP growth. Specifically, the GDP growth rate is modeled as
\[
\label{linearimp}
g (u) = g_0 - \alpha u,
\]
where $g_0 \in \mathbb{R}$ and $\alpha>0$ are constants. The term  $g_0 \in \mathbb{R}$ describes the baseline GDP growth rate in the absence of any fiscal intervention. 
Fiscal tightening (i.e. $u_t > 0$) has a contractionary effect on GDP growth, reducing $g(u_t)$, whereas an expansionary fiscal stance (i.e. $u_t < 0$) increases the growth rate.
In this case, the equation \eqref{SDE1} simplifies to
\[
dX_{t}^{x,u} = X_{t}^{x,u} \left[ \left( r - g_0 - (1 - \alpha) u_t \right) dt + \sigma \, dW_t \right], \quad X_{0}^{x,u} = x.
\]
The parameter $\alpha \in (0, +\infty)$ quantifies the relative impact of fiscal policy on GDP growth compared to debt dynamics. When $0 < \alpha < 1$, a fiscal surplus ($u_t > 0$) reduces the instantaneous growth rate of the debt-to-GDP ratio. Conversely, if $\alpha > 1$, even a fiscal deficit ($u_t < 0$) can lead to a decline in the debt-to-GDP growth rate. This behavior arises because, in the first case, fiscal policy has a smaller influence on GDP growth than on debt accumulation, whereas in the second case, it has a stronger impact on GDP than on debt.

Then in this special case, the HJB-equation reduces to 
\begin{equation}\label{hjb3}
\inf_{u\in[-U_1,U_2]}[x \left[ r - g_0 +\alpha u -u \right]  v'(x)
+ \frac{1}{2} \sigma^2 x^2 v''(x)+ C x^m + ku -\lambda v(x)]=0, \quad x \in (0, + \infty).
\end{equation}
Let, $H(x,u,v'(x))= xv'(x)(r-g_0)+[k-(1-\alpha)xv'(x)]u$, Eq. \eqref{hjb3} reads as 
\begin{equation}\label{hjb+h}
    \frac{1}{2}\sigma^2x^2v''(x)+\inf_{u\in [-U_1,U_2]}H(x,u,v'(x))-\lambda v(x)+Cx^m=0.
\end{equation} 
 According to the Verification Theorem \ref{VT}, the candidate to be an optimal control is $u^*_t = u^*(X^{u^*,x}_t)$ where the function $u^*(x)$ is the minimizer of $H(x,u,v'(x))$ and is given by 
 
\begin{equation}
    u^*(x)=\begin{cases} 
\label{u*}
    -U_1 \quad if \quad k-(1-\alpha)xv'(x)\geq0\\
    U_2 \quad if \quad k-(1-\alpha)xv'(x) <0.
    \end{cases}
\end{equation}

We first focus on the case $\alpha>1$. From Eq. \eqref{u*}, the optimal candidate control $u^*(x)$ is constant and given by $u^*(x) \equiv -U_1$ for any $x>0$. Infact, since $v(x)$ is increasing  (see Proposition \ref{p31}) the condition $k-(1-\alpha)xv'(x)\geq0$ can be written as $xv'(x)\geq\frac{k}{1-\alpha}<0$ which is satisfied for any $x>0$.

Hence, for $\alpha>1$, the HJB-equation in \eqref{hjb+h} for any $x>0$ reduces to 
\begin{equation}
\label{hjbbanale}
\frac{1}{2} \sigma^2 x^2 v''(x) 
+ x \left[(1-\alpha)U_1 + r - g_0 ) \right] v'(x)  - \lambda v(x) 
+ C x^m - k U_1 = 0.
\end{equation}

Eq. \eqref{hjbbanale} has the following structure
\begin{equation}
\label{hjbbanale2}
       \frac{1}{2} \sigma^2 x^2 v''(x) + \mu xv'(x)- \lambda v(x) + Cx^m -kU_1= 0,
 \end{equation} 
 with $\mu=r-g_0+(1-\alpha)U_1$, and in the next lemma we derive an explicit solution.
\begin{lemma}\label{lemma1}
The function 
\begin{equation}
    V(x)=C\zeta x^m-\frac{kU_1}{\lambda},
\end{equation}
with
\begin{equation}
\label{zeta2.0}
    \zeta = -\frac{1}{\frac{1}{2} \sigma^2 m^2 + \left( \mu - \frac{1}{2} \sigma^2 \right) m - \lambda},
\end{equation}
solves the Eq. \eqref{hjbbanale2}.
\end{lemma}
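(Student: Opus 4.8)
The plan is to verify directly, by substitution, that the proposed candidate $V(x)=C\zeta x^m - \frac{kU_1}{\lambda}$ satisfies the linear ODE \eqref{hjbbanale2}. This is essentially an ansatz calculation: the right-hand side $Cx^m - kU_1$ has two terms, a power term and a constant, so one looks for a particular solution of the same shape, namely a constant multiple of $x^m$ plus a constant.

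First I would compute the derivatives of the monomial part: if $V(x)=A x^m + B$ with $A=C\zeta$ and $B=-\frac{kU_1}{\lambda}$, then $V'(x)=mAx^{m-1}$ and $V''(x)=m(m-1)Ax^{m-2}$. Substituting into the left-hand side of \eqref{hjbbanale2} gives
\[
\frac{1}{2}\sigma^2 x^2 \cdot m(m-1)Ax^{m-2} + \mu x\cdot mAx^{m-1} - \lambda(Ax^m+B) + Cx^m - kU_1.
\]
Collecting the $x^m$ terms yields $A\left[\tfrac12\sigma^2 m(m-1) + \mu m - \lambda\right] + C$, and collecting the constant terms yields $-\lambda B - kU_1$. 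For the equation to hold identically in $x$, both groups must vanish. The constant group gives $B=-\frac{kU_1}{\lambda}$, which matches. The $x^m$ group gives $A = -C\big/\big[\tfrac12\sigma^2 m(m-1)+\mu m-\lambda\big]$; rewriting $\tfrac12\sigma^2 m(m-1)=\tfrac12\sigma^2 m^2-\tfrac12\sigma^2 m$, the bracket becomes $\tfrac12\sigma^2 m^2 + (\mu-\tfrac12\sigma^2)m-\lambda$, so $A = C\zeta$ with $\zeta$ exactly as in \eqref{zeta2.0}. This confirms $V$ solves \eqref{hjbbanale2}.

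There is essentially no serious obstacle here; the only point requiring a word of care is that $\zeta$ is well defined, i.e. that the denominator $\tfrac12\sigma^2 m^2+(\mu-\tfrac12\sigma^2)m-\lambda$ is nonzero. I would note that under Assumption \ref{assumption1}(ii) we have $\lambda>\lambda_m$, and since $\mu = r-g_0+(1-\alpha)U_1 \le r - \bar g_1 + U_1$ (because $g_0=g(0)\ge \bar g_1$ and $\alpha>1$ makes $(1-\alpha)U_1<0<U_1$), one checks that the denominator equals $m\mu + \tfrac12\sigma^2 m(m-1) - \lambda \le \lambda_m - \lambda < 0$, so in particular it is strictly negative and $\zeta>0$. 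This also makes clear that $V$ inherits the growth bound $|V(x)|\le C_1(1+x^m)$ required by the Verification Theorem \ref{VT}, a remark worth including even though the lemma as stated only claims that $V$ solves the ODE.
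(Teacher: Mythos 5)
Your proof is correct and follows essentially the same route as the paper: verify the power-plus-constant ansatz by direct substitution into \eqref{hjbbanale2} and match the $x^m$ and constant coefficients, which is what the paper does in two steps (homogeneous part for $\zeta$, then the constant shift $\theta=-kU_1/\lambda$). Your added check that the denominator of $\zeta$ is strictly negative under Assumption \ref{assumption1}(ii) (hence $\zeta>0$ and the polynomial growth bound holds) is a worthwhile observation the paper leaves implicit, but it does not change the argument.
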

\begin{proof}
See Appendix \ref{appendice} for the proof.
\end{proof}

Finally, from Theorem \ref{VT}, we have that  the following result holds. 
\begin{proposition}
For $\alpha>1$ the value function is $v(x)=C\zeta x^m-\frac{kU_1}{\lambda} \quad \forall x>0 $, where $\zeta$ is given in \eqref{zeta2.0} (with $\mu=r-g_0+(1-\alpha)U_1$) and the optimal control is constant and given by $u^*_t \equiv -U_1$, $t\geq0$.
\end{proposition}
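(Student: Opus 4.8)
The plan is to apply the Verification Theorem~\ref{VT} with the explicit candidate $w(x) := V(x) = C\zeta x^m - \frac{kU_1}{\lambda}$ provided by Lemma~\ref{lemma1}. First I would check the sign of $\zeta$: since $\lambda > \lambda_m$ and $\lambda_m = m(r - \bar g_1 + U_1) + \frac12 m(m-1)\sigma^2 \geq m\mu + \frac12 m(m-1)\sigma^2$ (using $\bar g_1 \le g_0$ and $U_1 > 0$, so that $r - \bar g_1 + U_1 \ge r - g_0 + (1-\alpha)U_1 = \mu$ when $\alpha \ge 1$), the denominator in \eqref{zeta2.0}, namely $\frac12\sigma^2 m^2 + (\mu - \frac12\sigma^2)m - \lambda = m\mu + \frac12 m(m-1)\sigma^2 - \lambda$, is strictly negative, hence $\zeta > 0$ and $V$ is a nonnegative-leading, increasing, convex function on $(0,+\infty)$. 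In particular the growth bound $|V(x)| \le C_1(1+x^m)$ holds for a suitable $C_1>0$, so the hypotheses on $w$ in Theorem~\ref{VT} are met.

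Next I would verify that $V$ actually solves the full HJB equation \eqref{hjb+h} together with the announced minimizer. By Lemma~\ref{lemma1}, $V$ solves \eqref{hjbbanale2} with $\mu = r - g_0 + (1-\alpha)U_1$, which is precisely \eqref{hjb+h} \emph{evaluated at the control $u \equiv -U_1$}. So it remains only to show that, for $\alpha > 1$, the infimum in \eqref{hjb+h} over $u \in [-U_1,U_2]$ is indeed attained at $u = -U_1$ for every $x>0$. Since $H(x,u,V'(x))$ is affine in $u$ with slope $k - (1-\alpha)xV'(x)$, the minimizer is characterized by \eqref{u*}; and because $V'(x) = mC\zeta x^{m-1} > 0$ (as $\zeta>0$, $C>0$, $m\ge 2$) while $1 - \alpha < 0$, we get $k - (1-\alpha)xV'(x) = k + (\alpha-1)xV'(x) > k > 0$ for all $x>0$. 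Thus the slope is strictly positive, the affine map $u \mapsto H(x,u,V'(x))$ is strictly increasing, and its minimum on $[-U_1,U_2]$ is attained at $u^*(x) \equiv -U_1$. Consequently
\[
\inf_{u\in[-U_1,U_2]}\left\{\mathcal L^u V(x) + f(x,u) - \lambda V(x)\right\} = \mathcal L^{-U_1}V(x) + f(x,-U_1) - \lambda V(x) = 0 \quad \forall x>0,
\]
which is exactly the second hypothesis of Theorem~\ref{VT} with $u^*(x)\equiv -U_1$.

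Invoking Theorem~\ref{VT} then yields $v(x) = V(x) = C\zeta x^m - \frac{kU_1}{\lambda}$ for all $x>0$, and that the constant Markovian control $u^*_t = u^*(X^{u^*,x}_t) \equiv -U_1$ is optimal. I expect the only genuinely delicate point to be the sign analysis of $\zeta$ and the comparison $\lambda_m \ge m\mu + \frac12 m(m-1)\sigma^2$, i.e.\ confirming that Assumption~\ref{assumption1}(ii) forces the denominator in \eqref{zeta2.0} to be negative in the regime $\alpha>1$; once that sign is pinned down, the strict positivity of $V'$ and hence the identification of the minimizer $-U_1$ is immediate, and the rest is a direct citation of the verification theorem. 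One should also note that $V$ is admissible as a $\mathcal C^2$ candidate trivially, since it is a polynomial plus a constant, so no regularity issue arises.
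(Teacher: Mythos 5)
Your proposal is correct and follows essentially the same route as the paper: take the candidate $V(x)=C\zeta x^m-\frac{kU_1}{\lambda}$ from Lemma~\ref{lemma1} and conclude via the Verification Theorem~\ref{VT}. The one point where you go beyond the paper's argument is the identification of the minimizer: the paper infers $u^*\equiv -U_1$ from the monotonicity of the true value function $v$ (Proposition~\ref{p31}) before reducing \eqref{hjb+h} to \eqref{hjbbanale2}, whereas you check directly that the \emph{candidate} $V$ satisfies the full HJB, by showing from Assumption~\ref{assumption1}(ii) (using $\bar g_1\le g(0)=g_0$ and $\alpha U_1>0$, so $m\mu+\tfrac12 m(m-1)\sigma^2\le\lambda_m<\lambda$) that the denominator in \eqref{zeta2.0} is negative, hence $\zeta>0$, $V'>0$, and the slope $k-(1-\alpha)xV'(x)=k+(\alpha-1)xV'(x)>0$ in \eqref{u*}. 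This sign check on $V$ itself is exactly what a rigorous application of Theorem~\ref{VT} needs (the theorem is stated for the candidate $w$, not for $v$), so your version fills in a detail the paper leaves implicit; otherwise the two arguments coincide.
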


We can interpret this result from a macroeconomic perspective because, fiscal policy in the case of $\alpha > 1$ has a greater impact on GDP growth than on debt growth. Therefore, the optimal choice, also in terms of opportunity cost, considering the benefit for citizens (see Remark \ref{etaxi}) and the goal of optimizing the debt-to-GDP ratio can only be a policy of maintaining a constant deficit-to-debt ratio. This is consistent with the result proved in Section 6 in \cite{ceci}  in the case $k=0$. 
 
We now focus on the  case $0<\alpha<1$. By Proposition \ref{p31} we know that $v(x)$ is increasing and convex, and so $xv'(x)$ is increasing. Therefore, as Figure \ref{fig:b} clarifies, from Eq. \eqref{u*} there exists a $b \in (0,+\infty)$ such that
\begin{equation}
 \begin{cases}
          xv'(x)\leq \frac{k}{1-\alpha} <=> 0<x\leq b\\
          xv'(x) > \frac{k}{1-\alpha} <=> x>b.
      \end{cases}   
\end{equation}
\begin{figure}
    \centering
    \includegraphics[width=0.5\linewidth]{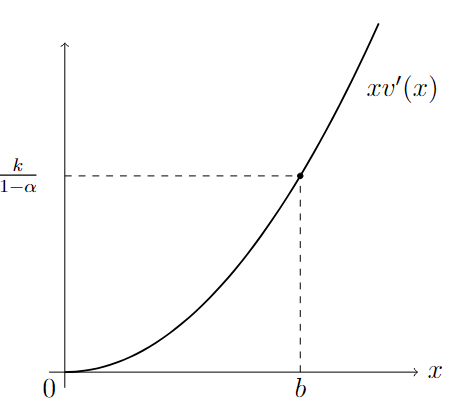}
    \caption{Graphic determination of the threshold b}
    \label{fig:b}
\end{figure}

Thus, for $0<\alpha<1$ the candidate optimal control has the form
\begin{equation}
    u^*(x)=\begin{cases}
    -U_1 \quad for \quad 0<x\leq b\\
    U_2 \quad for \quad x>b,
    \end{cases}
\end{equation}
and the HJB-equation in \eqref{hjb+h}, splits in two ODEs. 

For $0<x\leq b$, we have
  \begin{equation}
  \label{ramo1}
      \frac{1}{2} \sigma^2 x^2 v''(x) + xv'(x) \left[ (1 - \alpha) U_1 + r- g_0 \right]  - \lambda v(x) - k U_1 + C x^m = 0
  \end{equation}
and, for $x > b$, we have
\begin{equation}
\label{ramo2}
    \frac{1}{2} \sigma^2 x^2 v''(x) + x v'(x) \left[ - (1 - \alpha) U_2+r - g_0 \right]  - \lambda v(x) + k U_2 + C x^m = 0.
\end{equation}
We now observe that both equations \eqref{ramo1} and \eqref{ramo2} have the same structure
\begin{equation}
    \label{hjb}
       \frac{1}{2} \sigma^2 x^2 v''(x) + \mu xv'(x)- \lambda v(x) + Cx^m + \bar{U}= 0
 \end{equation} 
with
\begin{equation}
\begin{cases}
\label{mu}
    \mu=r-g_0+(1-\alpha)U_1=\mu_1 \quad for \ Eq.\ \eqref{ramo1} \\
    \mu=r-g_0-(1-\alpha)U_2= \mu_2 \quad for \ Eq.\ \eqref{ramo2}
\end{cases}
\end{equation}

and, $\bar{{U}}=-kU_1$ for Eq.\eqref{ramo1} and $\bar{U}=kU_2$ for Eq. \eqref{ramo2}.
Below we give a preliminary result.

\begin{lemma}
\label{lemma2}
Every function of the form 

\begin{equation}
V(x) = A_1 x^{\gamma_1} + A_2 x^{\gamma_2} + C \zeta x^{m}+\frac{\bar{U}}{\lambda},\\
\end{equation}

with
\[
\gamma_{1} = \frac{-\mu + \frac{1}{2} \sigma^2 - \sqrt{\left( \mu - \frac{1}{2} \sigma^2 \right)^2 + 2 \sigma^2 \lambda}}{\sigma^2},
\]

\[
\gamma_{2} = \frac{-\mu + \frac{1}{2} \sigma^2 + \sqrt{\left( \mu - \frac{1}{2} \sigma^2 \right)^2 + 2 \sigma^2 \lambda}}{\sigma^2},
\]

 \[
 \zeta=-\frac{1}{\frac{1}{2} \sigma^2 (m-1) m + \mu (m-1) - \lambda}
 \]

and $A_1,A_2\in\mathbb{R}$, solves the Eq. \eqref{hjb}.
\end{lemma}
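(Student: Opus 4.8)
The plan is to recognize \eqref{hjb} as an inhomogeneous linear ODE of Cauchy--Euler type and to verify $V$ by superposition. Introduce the linear operator $\mathcal{A}\phi(x):=\tfrac12\sigma^2 x^2\phi''(x)+\mu x\phi'(x)-\lambda\phi(x)$, so that \eqref{hjb} is precisely the equation $\mathcal{A}v(x)+Cx^m+\bar U=0$. Since $\mathcal{A}$ is linear and $V$ is a sum of four terms, it suffices to compute $\mathcal{A}$ on each term separately and add the contributions.

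First I would treat the homogeneous terms. A direct computation gives, for any exponent $\gamma$,
\[
\mathcal{A}(x^\gamma)=\Big[\tfrac12\sigma^2\gamma(\gamma-1)+\mu\gamma-\lambda\Big]x^\gamma=\Big[\tfrac12\sigma^2\gamma^2+\big(\mu-\tfrac12\sigma^2\big)\gamma-\lambda\Big]x^\gamma,
\]
so $x^\gamma$ lies in the kernel of $\mathcal{A}$ exactly when $\gamma$ solves the characteristic equation $\tfrac12\sigma^2\gamma^2+(\mu-\tfrac12\sigma^2)\gamma-\lambda=0$. Its discriminant $(\mu-\tfrac12\sigma^2)^2+2\sigma^2\lambda$ is strictly positive because $\lambda>0$, so the quadratic formula returns precisely the two real roots $\gamma_1$ and $\gamma_2$ displayed in the statement (and, since their product equals $-2\lambda/\sigma^2<0$, one has $\gamma_1<0<\gamma_2$, a fact not needed here but useful later). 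Hence $\mathcal{A}(A_1 x^{\gamma_1}+A_2 x^{\gamma_2})=0$ for every $A_1,A_2\in\mathbb{R}$.

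Next I would identify the particular part. Plugging $x^m$ into $\mathcal{A}$ yields $\mathcal{A}(x^m)=\big[\tfrac12\sigma^2 m(m-1)+\mu m-\lambda\big]x^m$, and the bracketed constant is nonzero --- in fact strictly negative --- under Assumption~\ref{assumption1}(ii): since $g_0\in[\bar g_1,\bar g_2]$ (as $g(0)=g_0$) and $0<\alpha<1$, both choices of $\mu$ in \eqref{mu} satisfy $\mu\le r-\bar g_1+U_1$, whence $\tfrac12\sigma^2 m(m-1)+\mu m\le\lambda_m<\lambda$ by \eqref{eq:lambda_def}. Therefore $\zeta$ is well defined (and $\zeta>0$), and matching the coefficient of $x^m$ forces exactly the value $\zeta$ in the statement, giving $\mathcal{A}(C\zeta x^m)=-Cx^m$; likewise $\mathcal{A}(\bar U/\lambda)=-\bar U$, since the constant is annihilated by the first two terms of $\mathcal{A}$. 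Adding the four contributions, $\mathcal{A}V(x)+Cx^m+\bar U=0-Cx^m-\bar U+Cx^m+\bar U=0$, which is \eqref{hjb}.

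There is essentially no obstacle: once $\mathcal{A}$ is identified as a Cauchy--Euler operator the verification is a routine coefficient match. The only point deserving a line of care is the non-degeneracy of the denominator defining $\zeta$, which I would justify exactly as above through Assumption~\ref{assumption1}(ii) and the bound \eqref{eq:lambda_def}; the same estimate also gives $\zeta>0$, which will be convenient when the growth condition $|V(x)|\le C_1(1+x^m)$ from Theorem~\ref{VT} is later used to pin down $A_1$ and $A_2$.
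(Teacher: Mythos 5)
Your verification is correct in substance and follows essentially the same route as the paper's own proof: a power ansatz, the Cauchy--Euler characteristic equation $\tfrac12\sigma^2\gamma(\gamma-1)+\mu\gamma-\lambda=0$ producing $\gamma_{1},\gamma_{2}$, coefficient matching for the $x^m$ term, and a constant particular term absorbing $\bar U$. Phrasing this through the linear operator $\mathcal{A}$ and observing directly that $\mathcal{A}(\bar U/\lambda)=-\bar U$ merely streamlines the paper's two-step argument (solve the equation with $\bar U=0$, then shift by the constant $\theta=\bar U/\lambda$); the two proofs are the same in all essentials. Your added remark on non-degeneracy of the denominator defining $\zeta$ (via $g(0)=g_0\ge\bar g_1$, $\mu\le r-\bar g_1+U_1$ for both choices in \eqref{mu}, hence $\tfrac12\sigma^2 m(m-1)+\mu m\le\lambda_m<\lambda$ by \eqref{eq:lambda_def}) is sound and is a point the paper leaves implicit; it also yields $\zeta>0$.

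One discrepancy should be flagged rather than glossed over. Your computation $\mathcal{A}(x^m)=\bigl[\tfrac12\sigma^2 m(m-1)+\mu m-\lambda\bigr]x^m$ is the correct one, so the coefficient matching forces $\zeta=-1/\bigl[\tfrac12\sigma^2 m(m-1)+\mu m-\lambda\bigr]$, i.e.\ with $\mu m$ in the denominator, exactly as in Lemma \ref{lemma1} for the same structure equation. This is \emph{not} ``exactly the value $\zeta$ in the statement'', which displays $\mu(m-1)$; the two differ by $\mu$ in the denominator. The slip originates in the paper itself: in the appendix the collected coefficient of $Cx^m$ in \eqref{omogenea} is written with $\mu(m-1)$ even though the substitution $\mu x\cdot C\zeta m x^{m-1}=\mu m\,C\zeta x^m$ gives $\mu m$, and the lemma (and later $\zeta_1,\zeta_2$ in \eqref{gamma}) inherit this typo. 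Your argument therefore proves the lemma with the corrected $\zeta$; you should state this explicitly instead of asserting agreement with the displayed formula, since as literally written the claimed match fails.
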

\begin{proof}
See Appendix \ref{appendice} for the proof.
\end{proof}

According to Lemma \ref{lemma2}  the candidate value function $v(x)$ is of the form
\begin{equation}\label{vf1}
v(x)=
    \begin{cases}
        A_{11}x^{\gamma_1}+A_{12}x^{\gamma_2}+C \zeta_1 x^m-\frac{kU_1}{\lambda} \quad 0<x\leq b\\
         A_{21}x^{\bar \gamma_1}+A_{22}x^{\bar \gamma_2}+C \zeta_2 x^m + \frac{kU_2}{\lambda} \quad x>b
    \end{cases}
\end{equation}
where,
\begin{equation}
\begin{split}
\label{gamma}
     \gamma_{1} &= \frac{-\tilde{\mu}_1 - \sqrt{ \tilde{\mu}_1^2 + 2 \lambda \sigma^2 }}{\sigma^2}, \qquad \gamma_{2} = \frac{-\tilde{\mu}_1 + \sqrt{ \tilde{\mu}_1^2 + 2 \lambda \sigma^2 }}{\sigma^2},\\
      \bar{\gamma}_{1} &= \frac{-\tilde{\mu}_2 - \sqrt{ \tilde{\mu}_2^2 + 2 \lambda \sigma^2 }}{\sigma^2},\qquad \bar{\gamma}_{2} = \frac{-\tilde{\mu}_2 + \sqrt{ \tilde{\mu}_2^2 + 2 \lambda \sigma^2 }}{\sigma^2},\\
      \zeta_1&=-\frac{1}{\frac{1}{2} \sigma^2 (m-1) m + \mu_1 (m-1) - \lambda}, \qquad \zeta_2=-\frac{1}{\frac{1}{2} \sigma^2 (m-1) m + \mu_2 (m-1) - \lambda},\\
      \tilde{\mu}_i&=\mu_i-\frac{1}{2}\sigma^2, \quad i=1,2.
   \end{split}
\end{equation}
With $\mu_i$ in Eq. \eqref{mu} and the constants $A_{ij}$, $i,j=1,2$ and the threshold $b$ to be determined. We now make some preliminary considerations.

\begin{lemma} \label{lemma3} We have that $\gamma_1<0$, $\bar{\gamma}_1<0$, and $\bar \gamma_2> m$.  
\end{lemma}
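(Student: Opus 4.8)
The plan is to verify each of the three inequalities directly from the closed-form expressions in \eqref{gamma}, treating $\gamma_1,\bar\gamma_1$ together (they have the same sign by the same argument) and then $\bar\gamma_2$ separately. Recall that $\gamma_1,\gamma_2$ are the two roots of the characteristic quadratic
\[
q_1(\gamma) := \tfrac{1}{2}\sigma^2\gamma^2 + \tilde\mu_1\gamma - \lambda = \tfrac{1}{2}\sigma^2\gamma^2 + \bigl(\mu_1-\tfrac{1}{2}\sigma^2\bigr)\gamma - \lambda,
\]
and similarly $\bar\gamma_1,\bar\gamma_2$ are the roots of $q_2(\gamma) := \tfrac{1}{2}\sigma^2\gamma^2 + \tilde\mu_2\gamma - \lambda$. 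Since the leading coefficient $\tfrac{1}{2}\sigma^2$ is positive and the constant term $-\lambda$ is negative, each quadratic has one strictly negative and one strictly positive root; as $\gamma_1$ (resp. $\bar\gamma_1$) is by construction the smaller root (the one with the minus sign in front of the square root), we get $\gamma_1<0$ and $\bar\gamma_1<0$ immediately. Equivalently one notes $q_i(0) = -\lambda < 0$, which forces $0$ to lie strictly between the two roots.

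For the bound $\bar\gamma_2 > m$, the cleanest route is again the sign of the quadratic $q_2$ at the point $m$: since $\bar\gamma_2$ is the larger root of $q_2$ and the parabola opens upward, $\bar\gamma_2 > m$ is equivalent to $q_2(m) < 0$. Now
\[
q_2(m) = \tfrac{1}{2}\sigma^2 m^2 + \bigl(\mu_2-\tfrac{1}{2}\sigma^2\bigr)m - \lambda
       = \tfrac{1}{2}\sigma^2 m(m-1) + \mu_2 m - \lambda,
\]
and I would compare this with the quantity controlled by Assumption \ref{assumption1}(ii), namely $\lambda > \lambda_m = m(r-\bar g_1 + U_1) + \tfrac{1}{2}m(m-1)\sigma^2$. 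Since $\mu_2 = r - g_0 - (1-\alpha)U_2$ with $0<\alpha<1$ and $g_0 \ge \bar g_1$ (as $g_0 = g(0) \in [\bar g_1,\bar g_2]$ under Assumption \ref{Ass1}), we have $\mu_2 \le r - \bar g_1 < r - \bar g_1 + U_1$, hence $\tfrac{1}{2}\sigma^2 m(m-1) + \mu_2 m < \tfrac{1}{2}\sigma^2 m(m-1) + m(r-\bar g_1+U_1) = \lambda_m < \lambda$, which gives $q_2(m) < 0$ as desired.

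The main obstacle is purely bookkeeping: making sure the inequality $\mu_2 m + \tfrac{1}{2}\sigma^2 m(m-1) < \lambda_m$ genuinely follows from the standing assumptions, i.e. that the definition of $\lambda_m$ in \eqref{eq:lambda_def} dominates the drift term $\mu_2$ appearing in $q_2(m)$. This hinges on the two facts $U_2 > 0$ (so dropping $-(1-\alpha)U_2 \le 0$ only increases the expression) and $g_0 \ge \bar g_1$, both of which are available. One should also double-check the harmless sign conventions in \eqref{gamma} — that $\tilde\mu_i = \mu_i - \tfrac12\sigma^2$ and that the radicand $\tilde\mu_i^2 + 2\lambda\sigma^2$ is the discriminant of $q_i$ up to the factor $\sigma^2$ — so that "$\bar\gamma_2$ = larger root of $q_2$" is correctly identified. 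Once these identifications are in place, all three claims reduce to the elementary observation that an upward parabola with a negative value at a point $x_0$ has its larger root to the right of $x_0$ and its smaller root to the left.
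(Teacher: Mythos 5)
Your proof is correct and follows essentially the same route as the paper: negativity of $\gamma_1,\bar\gamma_1$ from the root formula (smaller root of an upward parabola with constant term $-\lambda<0$), and $\bar\gamma_2>m$ reduced to $\tfrac12\sigma^2 m(m-1)+\mu_2 m<\lambda$ via Assumption \ref{assumption1}(ii) and $\lambda>\lambda_m$. The only cosmetic difference is in the final bookkeeping: the paper substitutes the explicit value $\bar g_1=g_0-\alpha U_2$ and reduces the inequality $\lambda_m\ge \tfrac12\sigma^2 m(m-1)+\mu_2 m$ to $U_1\ge -U_2$, whereas you obtain it directly from $g_0=g(0)\ge\bar g_1$ and $-(1-\alpha)U_2\le 0$, which is the same estimate in slightly more general form.
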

\begin{proof}
See Appendix \ref{appendice} for the proof.
\end{proof}

We are now in the position to give our main result.

\begin{proposition}\label{propvfun}
 Let $0<\alpha<1$ and $m \geq 2$ the value function is 
 
 \begin{equation}\label{vf2}
v(x)=
    \begin{cases}
        A_{1}x^{\gamma_2}+C \zeta_1 x^m-\frac{kU_1}{\lambda} \quad 0<x\leq\ b\\
         A_{2}x^{\bar \gamma_1}+C \zeta_2 x^m +  \frac{kU_2}{\lambda} \quad x>b.
    \end{cases}
    \end{equation}
where $A_1$, $A_2$, and $b$ are given by

\begin{equation}\label{A_1_prop}
A_1 =
\frac{
C(\zeta_1 - \zeta_2) b^{m} (m - \bar{\gamma}_1)
+ \dfrac{\bar{\gamma}_1}{\lambda}\,k(U_1 + U_2)
}{
b^{\gamma_2}(\bar{\gamma}_1 - \gamma_2)
}
\end{equation}

\begin{equation}\label{a2prop}
A_2 =
\frac{
C(\zeta_1 - \zeta_2) b^m (m - \gamma_2)
+ \dfrac{\gamma_2}{\lambda}\,k(U_1 + U_2)
}{
b^{\bar{\gamma}_1}(\bar{\gamma}_1 - \gamma_2)
}
\end{equation}

\begin{equation}\label{b}
b
= \left[
\frac{\bar{\gamma}_1\gamma_2\,k\,(U_1 + U_2)}
{C\,\lambda\,(\bar{\gamma}_1 - m)(\gamma_2 - m)(\zeta_1 - \zeta_2)}
\right]^{1/m}
\end{equation}

 Moreover, the optimal strategy is $u^*_t=u^*(X^{u^*}_t)$, where
 \begin{equation}
    u^*(x)=\begin{cases}
    -U_1 \quad for \quad 0<x\leq b\\
    U_2 \quad for \quad x>b.
    \end{cases}
\end{equation}

\end{proposition}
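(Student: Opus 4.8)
The plan is to pin down the three free parameters $A_1$, $A_2$, $b$ in the candidate \eqref{vf2} by imposing exactly the regularity conditions needed to apply the verification Theorem \ref{VT}, and then to check that the resulting function meets all the hypotheses of that theorem.

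I would start from the six-parameter family \eqref{vf1} supplied by Lemma \ref{lemma2} on the two regions. On $(0,b]$ the term $x^{\gamma_1}$ blows up as $x\to 0^+$ because $\gamma_1<0$ (Lemma \ref{lemma3}), which is incompatible with the lower bound of Proposition \ref{p31}(i); hence $A_{11}=0$. On $(b,+\infty)$ the term $x^{\bar\gamma_2}$ grows faster than $x^m$ because $\bar\gamma_2>m$ (Lemma \ref{lemma3}), which would violate the polynomial growth of Proposition \ref{p31}(ii); hence $A_{22}=0$. Writing $A_1:=A_{12}$, $A_2:=A_{21}$, this leaves the two-branch form \eqref{vf2}, with $A_1,A_2,b$ still free. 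Next I would require $w:=v$ of \eqref{vf2} to be of class $\mathcal{C}^2$ across the switching point $b$, i.e. continuity of $w$, $w'$ and $w''$ at $b$. A useful observation is that, once $w$ and $w'$ are continuous at $b$, subtracting \eqref{ramo1} from \eqref{ramo2} at $x=b$ shows that continuity of $w''$ is \emph{equivalent} to $b\,w'(b)=\tfrac{k}{1-\alpha}$, which is precisely the identity defining the threshold through \eqref{u*}. The matching conditions are therefore
\begin{align*}
A_1 b^{\gamma_2}+C\zeta_1 b^{m}-\tfrac{kU_1}{\lambda}
&= A_2 b^{\bar\gamma_1}+C\zeta_2 b^{m}+\tfrac{kU_2}{\lambda},\\
A_1\gamma_2 b^{\gamma_2}+C\zeta_1 m b^{m}
&= A_2\bar\gamma_1 b^{\bar\gamma_1}+C\zeta_2 m b^{m},\\
A_1\gamma_2 b^{\gamma_2}+C\zeta_1 m b^{m}
&= \tfrac{k}{1-\alpha},
\end{align*}
the first being continuity of $w$, the second $b$ times continuity of $w'$, and the third the threshold identity. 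The first two are linear in $(A_1,A_2)$ with nonzero determinant (since $\gamma_2>0>\bar\gamma_1$), so they give $A_1,A_2$ as explicit functions of $b$ — these are exactly \eqref{A_1_prop}–\eqref{a2prop} — and inserting them into the third equation collapses all the $b$-dependence to a single relation of the form $b^m=\mathrm{const}$, which is \eqref{b}. This bookkeeping is the longest part of the argument; the cancellations rely on the defining quadratic relations for $\gamma_1,\gamma_2,\bar\gamma_1,\bar\gamma_2$ and on the relation for $\zeta_1,\zeta_2$ in Lemma \ref{lemma2}. One also records that \eqref{b} genuinely determines a $b\in(0,+\infty)$: this uses $\bar\gamma_2>m$ and $\gamma_2>m$ (the latter, like the former in Lemma \ref{lemma3}, following from Assumption \ref{assumption1}(ii) together with Assumption \ref{Ass1}), which give $\zeta_1,\zeta_2>0$ and $\zeta_1>\zeta_2$, so that the bracket in \eqref{b} is positive.

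Finally I would apply Theorem \ref{VT} to this explicit $w$. The growth bound $|w(x)|\le C_1(1+x^m)$ is immediate: on $(0,b]$ all terms are bounded, while on $(b,+\infty)$ one has $x^{\bar\gamma_1}\le b^{\bar\gamma_1}$ since $\bar\gamma_1<0$, and $C\zeta_2 x^m$ is of the admissible order. By Lemma \ref{lemma2}, $w$ solves \eqref{ramo1} on $(0,b)$ and \eqref{ramo2} on $(b,+\infty)$, so $\mathcal{L}^{u^*(x)}w(x)+f(x,u^*(x))-\lambda w(x)=0$ off $b$, and by the $\mathcal{C}^2$-matching also at $x=b$. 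The genuinely delicate point — and the main obstacle — is to show that $u^*(x)$ of \eqref{u*} is indeed the minimizer of the affine map $u\mapsto H(x,u,w'(x))$ at every $x>0$; setting $\phi(x):=x\,w'(x)$, this means $\phi(x)\le\tfrac{k}{1-\alpha}$ on $(0,b]$ and $\phi(x)>\tfrac{k}{1-\alpha}$ on $(b,+\infty)$, the equality at $b$ being already built in. Since on each branch $\phi'$ changes sign at most once, it suffices to prove the single inequality $\phi'(b)=w'(b)+b\,w''(b)\ge 0$; equivalently, it suffices to check directly on \eqref{vf2} that $w$ is increasing and convex, which recovers for the candidate the monotonicity and convexity already known for the value function from Proposition \ref{p31}(iii)–(iv). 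Here the explicit signs $\zeta_1,\zeta_2>0$, $A_1<0$, $A_2<0$ — the last two read off from \eqref{A_1_prop}–\eqref{a2prop} after substituting \eqref{b} — together with $\gamma_2>m\ge 2$, $\bar\gamma_1<0$ and $w'(b)=\tfrac{k}{(1-\alpha)b}>0$, are the ingredients. Once this monotone/convex behaviour is verified, Theorem \ref{VT} yields $w=v$ and the optimality of the Markovian control $u^*_t=u^*(X^{u^*,x}_t)$. In summary, the two hard points are (i) reducing the nonlinear $3\times 3$ matching system to the closed forms \eqref{A_1_prop}–\eqref{b}, and (ii) showing that the explicit candidate is convex and increasing, so that the bang--bang optimal control switches exactly once, at $b$.
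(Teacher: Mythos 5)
Your construction is essentially the paper's own proof: you eliminate $A_{11}$ and $A_{22}$ via Proposition \ref{p31}(i)--(ii) together with $\gamma_1<0$, $\bar\gamma_2>m$ from Lemma \ref{lemma3}, then impose $\mathcal{C}^2$ pasting at $b$ and solve the resulting linear system plus threshold equation to obtain \eqref{A_1_prop}--\eqref{b}, your only cosmetic deviation being the replacement of the $v''$-continuity condition in \eqref{sitm} by the identity $b\,w'(b)=\tfrac{k}{1-\alpha}$, which (as you note, by subtracting \eqref{ramo1} from \eqref{ramo2} at $b$) is equivalent to it. The extra step you flag -- checking $\zeta_1>\zeta_2>0$, $\gamma_2>m$, the signs of $A_1,A_2$ and the monotonicity of $x\,w'(x)$ on each branch so that the bang--bang $u^*(x)$ is indeed the pointwise minimizer before invoking Theorem \ref{VT} -- is left implicit in the paper's proof, so your outline is, if anything, slightly more complete while following the same route.
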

\begin{proof}
    We start from Eq. \eqref{vf1}. Because of Lemma \ref{lemma2} we choose  $A_{11}=0$, so according to $(i)$ in Proposition \ref{p31}, $lim_{x\to 0^+} v(x) \geq -\frac{kU_1}{\lambda}$
 and $A_{22}=0$ which also implies  condition $(ii)$ in  Proposition \ref{p31}. Therefore, the candidate value function has the following form
\begin{equation}
v(x)=
    \begin{cases}
        A_{1}x^{\gamma_2}+C \zeta_1 x^m-\frac{kU_1}{\lambda} \quad 0<x\leq b\\
         A_{2}x^{\bar \gamma_1}+C \zeta_2 x^m +  \frac{kU_2}{\lambda} \quad x>b.
    \end{cases}
    \end{equation}
    It remains to determine $A_{1}$, $A_{2}$ and $b$ by imposing $v(x) \in \mathcal{C}^2(0, + \infty)$. Precisely, the conditions $v(b^+)=v(b^-)$, $v'(b^+)=v'(b^-)$ and $v''(b^+)=v''(b^-)$ reads as

\begin{equation}
\begin{cases} \label{sitm}
(1) \quad A_2 b^{\bar{\gamma}_1} + C\zeta_2 b^m + \dfrac{k U_2}{\lambda}
= A_1 b^{\gamma_2} + C\zeta_1 b^m - \dfrac{k U_1}{\lambda}, \\
(2) \quad \bar{\gamma}_1 A_2 b^{\bar{\gamma}_1-1} + m C\zeta_2 b^{m-1}
= \gamma_2 A_1 b^{\gamma_2-1} + m C\zeta_1 b^{m-1}, \\
(3) \quad \bar{\gamma}_1(\bar{\gamma}_1-1)A_2 b^{\bar{\gamma}_1-2} + m(m-1)C\zeta_2 b^{m-2}
= \gamma_2(\gamma_2-1)A_1 b^{\gamma_2-2} + m(m-1)C\zeta_1 b^{m-2}.
\end{cases}
\end{equation}

We consider now that $\Delta \zeta := \zeta_1 - \zeta_2.$ Then, the equation $(1)$ in \eqref{sitm} reads as


\begin{equation} \label{eq1mgen}
-A_1 b^{\gamma_2} + A_2 b^{\bar{\gamma}_1}
= C\Delta\zeta\, b^m - \frac{k(U_1 + U_2)}{\lambda}
\end{equation}
and the equation $(2)$ in \eqref{sitm}, reads as 


\begin{equation}\label{eq2mgen}
-\gamma_2 A_1 b^{\gamma_2-1} + \bar{\gamma}_1 A_2 b^{\bar{\gamma}_1-1}
= m C\Delta\zeta\, b^{m-1}.
\end{equation}
 The equations \eqref{eq1mgen} and \eqref{eq2mgen} form linear system in $A_1$ and $A_2$
\begin{equation}\label{sistdet}
\begin{cases}
-b^{\gamma_2} A_1 + b^{\bar{\gamma}_1} A_2
= C\Delta\zeta\, b^m - \dfrac{k(U_1 + U_2)}{\lambda}, \\
-\gamma_2 b^{\gamma_2-1} A_1 + \bar{\gamma}_1 b^{\bar{\gamma}_1-1} A_2
= m C\Delta\zeta\, b^{m-1}.
\end{cases}
\end{equation}

Now we consider the associated coefficient matrix $M$ to the Eq. \eqref{sistdet} and we calculate the determinant $D$ as follow

\begin{equation}
M =
\begin{pmatrix}
- b^{\gamma_2} & b^{\bar{\gamma}_1} \\
- \gamma_2 b^{\gamma_2-1} & \bar{\gamma}_1 b^{\bar{\gamma}_1-1}
\end{pmatrix},
\qquad
D := b^{\bar{\gamma}_1+\gamma_2-1}(\gamma_2 - \bar{\gamma}_1).
\end{equation}
From Lemma \ref{lemma3} we know that $\bar{\gamma}_1 \neq \gamma_2$ leading to $D \ne0$ and so the system has a unique solution for $A_1$ and $A_2$ that can be derived with Cramèr rule:

\begin{equation}\label{a1a2mgen}
\begin{split}
&A_1
= \frac{
C(\zeta_1 - \zeta_2) b^{m} (m - \bar{\gamma}_1)
+ \dfrac{\bar{\gamma}_1}{\lambda}\,k(U_1 + U_2)
}{
b^{\gamma_2}(\bar{\gamma}_1 - \gamma_2)
}\\
&A_2 =
\frac{
C(\zeta_1 - \zeta_2) b^m (m - \gamma_2)
+ \dfrac{\gamma_2}{\lambda}\,k(U_1 + U_2)
}{
b^{\bar{\gamma}_1}(\bar{\gamma}_1 - \gamma_2)
}.\\
\end{split}
\end{equation}



To complete the proof we need to find $b$. For this purpose lets consider $(3)$ from Eq. \eqref{sitm} which reads as



\begin{equation}\label{3condm}
\bar{\gamma}_1(\bar{\gamma}_1-1)A_2 b^{\bar{\gamma}_1-2}
- \gamma_2(\gamma_2-1)A_1 b^{\gamma_2-2}
- m(m-1)C\Delta\zeta\, b^{m-2} = 0.
\end{equation}

Now we replace in Eq. \eqref{3condm} the expression of $A_1$ and $A_2$ derived in \eqref{a1a2mgen}

\begin{equation}\label{deriveb}
- C\,\lambda\,(\bar{\gamma}_1 - m)(\gamma_2 - m)(\zeta_1 - \zeta_2)\, b^m
+ \bar{\gamma}_1\gamma_2\,k\,(U_1 + U_2) = 0.
\end{equation}
Since $\mu_1 > \mu_2$, it follows that $\zeta_1 > \zeta_2$. Moreover, we know that  $(\bar{\gamma}_1 - m)(\gamma_2 - m) < 0$. This condition is satisfied due to Lemma \ref{lemma3}, which states that $\gamma_2 > 2$ and $\bar{\gamma}_1 < 0$. Therefore, from \eqref{deriveb} we end up having
\begin{equation}
b^m
= \frac{\bar{\gamma}_1\gamma_2\,k\,(U_1 + U_2)}
{C\,\lambda\,(\bar{\gamma}_1 - m)(\gamma_2 - m)(\zeta_1 - \zeta_2)},
\end{equation}
which leads to Eq. \eqref{b}.

\end{proof}

\begin{remark}
In the particular case $k=0$ we get from Eq. \eqref{b} that $b=0$ and so the optimal fiscal policy is $u^*_t \equiv U_2$; this  is consistent with the result obtained in \cite{ceci}, where the authors do not consider the cost associated to the fiscal policy. We highlight that the presence of a cost associated to the fiscal policy changes the optimal fiscal policy from a constant maximum taxation level policy to a threshold policy, where the government takes action with the maximum allowed deficit-to-debt ratio if the debt-to-GDP ratio is below the threshold point $b$  and takes action with the maximum allowed surplus-to-debt ratio if the debt-to-GDP ratio is above the threshold point $b$.
\end{remark}

\section{Numerics}

In order to gain quantitative insights into the structure of the optimal debt management problem, we now proceed to compute a numerical approximation of the value function under a baseline parametrization, for the case of interest $0<\alpha<1$. The objective is twofold, at first we want to verify the internal consistency of the closed-form expressions derived in the previous section, and on the other hand we want to provide a concrete benchmark against which alternative parameter scenarios can later be compared. Specifically, we focus on the determination of the switching threshold $b$, the coefficients $(A_1,A_2)$, and the resulting form of the value function $v(x)$. By fixing a plausible set of macroeconomic and fiscal parameters, we can assess both the feasibility of the theoretical solution and its economic interpretability.

We consider the fixed parameters in Table \ref{tab:baseline_params} and plot the value function and the threshold point $b$ in  Fig. \ref{fig:base}.
\begin{table}[h!]
\centering
\begin{tabular}{lc}
\toprule
\textbf{Parameter} & \textbf{Value} \\
\midrule
$r$       & $0.1$ \\
$g_0$     & $0.02$ \\
$\sigma$  & $0.25$ \\
$U_1=U_2$ & $1.00$ \\
$\alpha$  & $0.50$ \\
$\lambda$ & $8.00$ \\
$m$       & $2.00$ \\
$C$       & $1.00$ \\
$k$       & $0.1$ \\
\bottomrule
\end{tabular}
\caption{Baseline scenario parameters.}
\label{tab:baseline_params}
\end{table}

\begin{figure}[h!]
\centering
\includegraphics[width=0.85\textwidth]{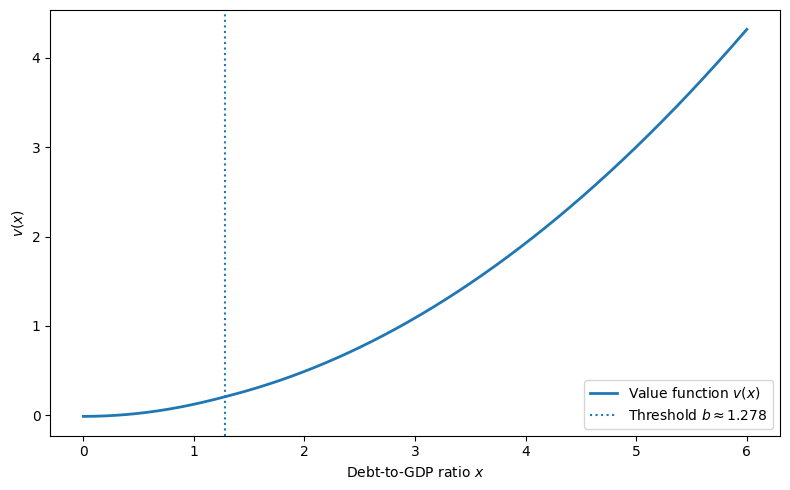}
\caption{Value function $v(x)$ in the baseline case}
\label{fig:base}
\end{figure}

Now, we discuss the economic interpretation  and consistency of results.

\begin{enumerate}
    \item 
    The value function starts from
    \[
    v(0^+) = -\frac{kU_1}{\lambda} = -0.0125 < 0.
    \]
    This small but negative intercept indicates that even in the absence of debt there is a non-zero political or economic cost associated with implementing restrictive fiscal policies. In practice, running a surplus when debt is already null yields no welfare gain and entails a minor political or administrative burden captured by the parameter $k$.

    \item 
    For positive values of $x$, the function $v(x)$ grows smoothly and convexly. 
    This convex shape reflects the fact that the economic cost of debt increases more than proportionally with its level, when debt is low, additional borrowing has limited implications; as debt rises, however, the marginal cost of further increases becomes rapidly larger due to higher risk premia, tighter fiscal constraints, and potential loss of market credibility.

    \item
    The model endogenously identifies a critical debt level $b \approx 1.28$, at which it becomes optimal to switch the fiscal stance:
    \begin{itemize}
        \item if $x<b$, the government optimally maintains an expansionary fiscal policy ($u=-U_1$) to support output and growth, accepting a moderate increase in debt;
        \item if $x>b$, debt becomes excessively costly and the optimal policy shifts to a restrictive stance ($u=U_2$) aimed at debt stabilization.
    \end{itemize}
    Therefore, the threshold $b$ can be interpreted as the economy’s fiscal tolerance level, as long as debt-to-GDP ratio remains below $b$, the marginal benefits of expansionary policies exceed their costs; once $x$ exceeds $b$, the expected marginal cost of debt accumulation dominates.

    \item 
    In the low-debt branch ($x<b$), the homogeneous term (linked to coefficient $A_1$) is almost negligible, and the quadratic particular component $C\zeta_1x^2$ dominates. 
    This means that at moderate debt levels, the value function is primarily shaped by the deterministic cost component associated with debt accumulation.

    \item 
    In the high-debt branch ($x>b$), the coefficient $A_2$ assumes a large absolute value to ensure smooth pasting of the function and its derivatives at $x=b$. 
    This steep curvature in the upper branch reflects the rapid increase in expected costs once the government enters the restrictive regime.
\end{enumerate}

After establishing the baseline evaluation, we now turn to a comparative analysis across heterogeneous economic environments. The aim is to assess how different structural conditions such the interest rate, the baseline growth rate, and the volatility of debt dynamics affect the value function and the optimal switching threshold. 
To this end, as in \cite{cade1} we contrast two polar cases: a strong economy, characterized by lower interest rates, higher growth potential, moderate volatility and a lower cost for the fiscal intervention; and a weak economy, where the opposite conditions prevail. However, unlike \cite{cade1}, our main instrument for highlighting the differences between the two economies is the cost parameter 
$k$, while the parameter $C$ is kept fixed at $1$ throughout the analysis. By recomputing the threshold $b$ and the associated value functions under these two scenarios, we can evaluate both the robustness of the optimal policy rule and the extent to which the shape of $v(x)$ reflects the relative strength or fragility of fundamentals. This comparison provides a clearer interpretation of the model’s economic content, both the threshold level $b$ and the curvature of the value function are lower in the strong economy, revealing how the velocity of the fiscal intervention and the marginal cost of debt accumulation differs substantially between strong and weak macroeconomic contexts.
In order to establish the comparison we fix the following common parameters $U_1 = U_2 = 1,\quad \alpha = 0.5,\quad m=2,\quad C=1, \quad \lambda=8$ and in Table \ref{tab:cmp} the parameters depending on the state of the economy (see Fig. \ref{fig:comp}).
\begin{table}[H]
\centering
\begin{tabular}{lccc}
\hline
\textbf{Parameters} & \textbf{Strong} & \ \textbf{Weak} \\
\hline
$r$  & $0.04$ & $0.29$ \\
$g_0$ & $0.03$ & $0.005$ \\
$\sigma$  & $0.20$ & $0.6$ \\
$k$ & $0.05$& $0.3$\\
\hline
\end{tabular}
\caption{Parameters for Strong/Weak economies.}
\label{tab:cmp}
\end{table}
\begin{figure}[H]
\centering
\includegraphics[width=0.85\textwidth]{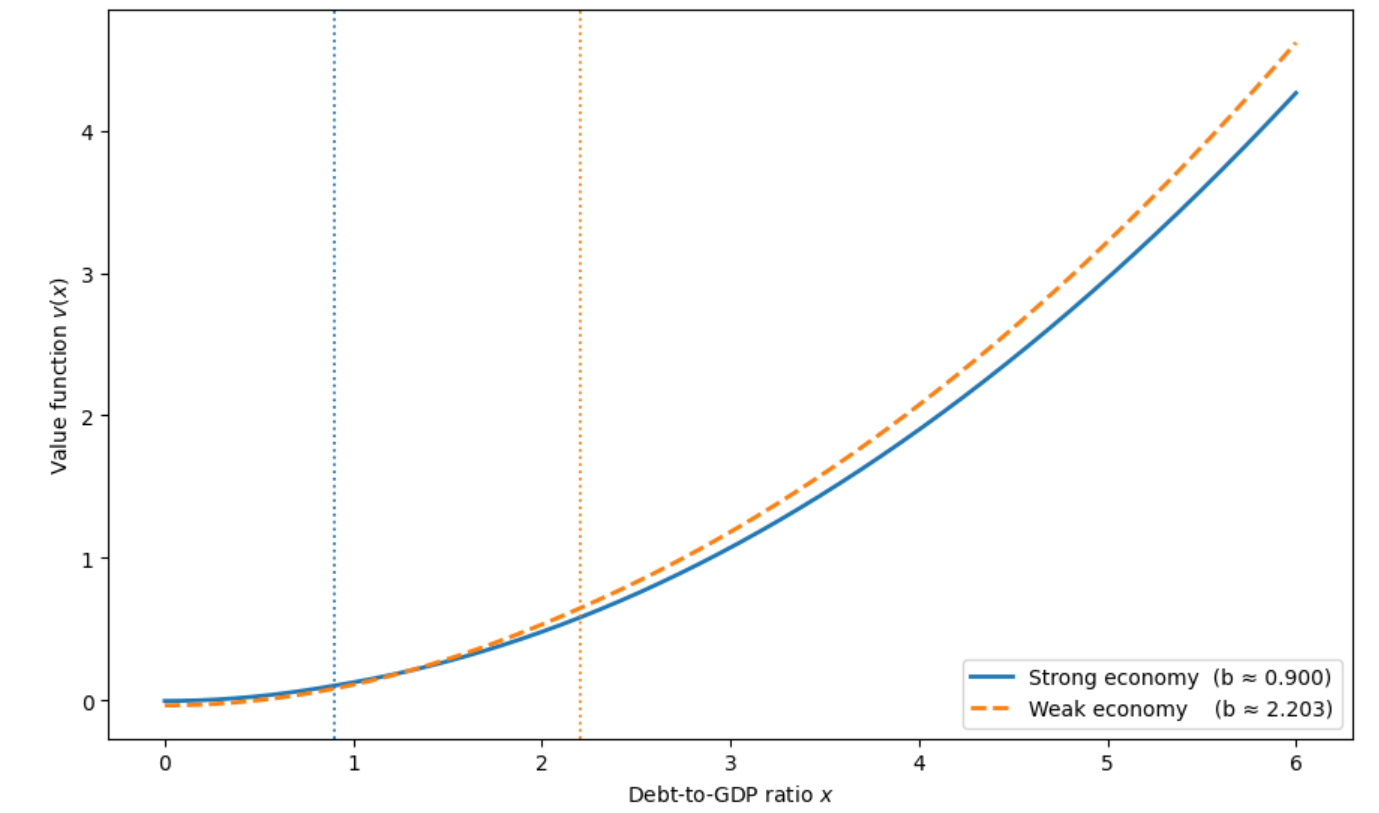}
\caption{Value function $v(x)$ comparison between Strong and Weak economies.}
\label{fig:comp}
\end{figure}

We now provide some economic interpretations of the results.
Figure \ref{fig:comp} compares the value functions and optimal thresholds
for two economies that differ both in their macroeconomic fundamentals 
and in their structural fiscal parameters. 
The resulting patterns are economically consistent and highlight how credibility,
adjustment costs, and policy preferences jointly determine the government’s optimal behaviour.

\begin{enumerate}
    \item 
    The strong economy is characterized by an higher potential growth, low volatility and a relatively low real interest rate 
    $(r,g_0,\sigma)=(0.04,0.03,0.20)$.
    Moreover, the structural parameter $k=0.05$ represent a policy-maker who faces low political adjustment costs, possibly as a consequence of more efficient bureaucratic processes and, more generally, of a more effective and virtuous system of government management.
    Under these favorable conditions, debt dynamics are stable 
    and fiscal corrections are not excessively costly.
    As a result, the optimal debt threshold is low 
    (approximately $b_{\text{strong}}\simeq 0.9$), 
    implying that the government finds it optimal to switch early 
    from expansionary to restrictive fiscal policies.
    The value function is lower across the majority of the debts levels, 
    reflecting the lower debt cost and a more  stable macroeconomic environment.

    \item 
    In contrast, the weak economy displays adverse fundamentals,
    lower growth, high volatility and a high real interest rate 
    $(r,g_0,\sigma)=(0.29,0.005,0.6)$.
    The structural parameter $k=0,3$ represent a policy-maker facing significant political adjustment costs, potentially arising from less streamlined bureaucratic procedures and, more broadly, from institutional arrangements that are not fully optimized in terms of administrative effectiveness.
    Consequently, the optimal threshold is much higher 
    (approximately $b_{\text{weak}}\simeq 2.2$),
    indicating that debt is allowed to accumulate longer 
    before corrective measures are implemented.
    The corresponding value function lies above that of the strong economy,
    consistent with a higher expected cost of debt accumulation.

    \item 
    The shape of the value functions is convex in both cases, 
    confirming that the expected discounted cost of debt increases more than proportionally
    with its level.
    \item 
    Overall, the comparison confirms the theoretical intuition.  
    In strong, credible economies with low fiscal adjustment costs,
    governments act promptly to stabilize debt, achieving long-term sustainability.  
    In weak, volatile economies where  the fiscal adjustment is costly and political horizons are short,
    fiscal reactions are delayed, and the government tolerates higher debt levels 
    before intervention becomes optimal.
\end{enumerate}

\begin{table}[htbp]
\centering
\caption{Parameter configurations for baseline and extreme scenarios.}
\label{tab:parameter_scenarios}
\begin{tabular}{lcccccc}
\toprule
\textbf{Scenario} & $r$ & $g_0$ & $\sigma$ & $k$ & Notes \\ 
\midrule
Strong economy              & $0.04$ & $0.03$ & $0.20$ & $0.05$ & High growth, low volatility \\
Weak economy                & $0.29$ & $0.005$ & $0.60$ & $0.30$ &  Low growth, high volatility \\
High interest               & $0.4$ & $0.02$ & $0.25$ & $0.10$ &  Elevated real rate \\
Low growth (negative $g_0$) & $0.01$ & $-0.8$& $0.25$ & $0.10$ &  Recessionary case \\
Very high volatility        & $0.1$ & $0.02$ & $1.5$ & $0.10$ &  High macro uncertainty \\
Benign macro, high $k$      & $0.04$ & $0.04$ & $0.12$ & $1$ & High adjustment cost \\
Adverse macro, low $k$      & $0.4$ & $0.01$ & $1$ & $0.02$ &  Fragile economy, low cost \\
\bottomrule
\label{tab:b_scenarios}
\end{tabular}
\end{table}

To complete our sensitivity analysis we make some simulation with different scenarios (see Fig.\ref{fig:all_trajectories}). To this purpose, in Table \ref{tab:b_scenarios} we extend the analysis by exploring extreme combinations of macroeconomic and structural parameters. The comparison highlights how the shape of the value function and the location of the optimal threshold $b$ react under limiting conditions. In environments with extremely adverse fundamentals such as high interest rates, low or even negative growth, and very high volatility the threshold $b$ moves upward and the value function becomes steeper. 
This reflects the fact that debt accumulation is more costly and uncertain,
so the government tolerates higher debt levels before committing to restrictive
measures. Conversely, in scenarios with lower a fiscal adjustment cost $k$, the threshold compresses and the value function becomes more curved near the origin, this behavior indicates an earlier fiscal response. Overall, the extreme configurations confirm the internal consistency of the model, as parameters approach their economic limits, the qualitative relationships between macro fundamentals, fiscal preferences,
and optimal intervention timing remain stable.

\begin{figure}[H]
\centering
\includegraphics[width=0.95\textwidth]{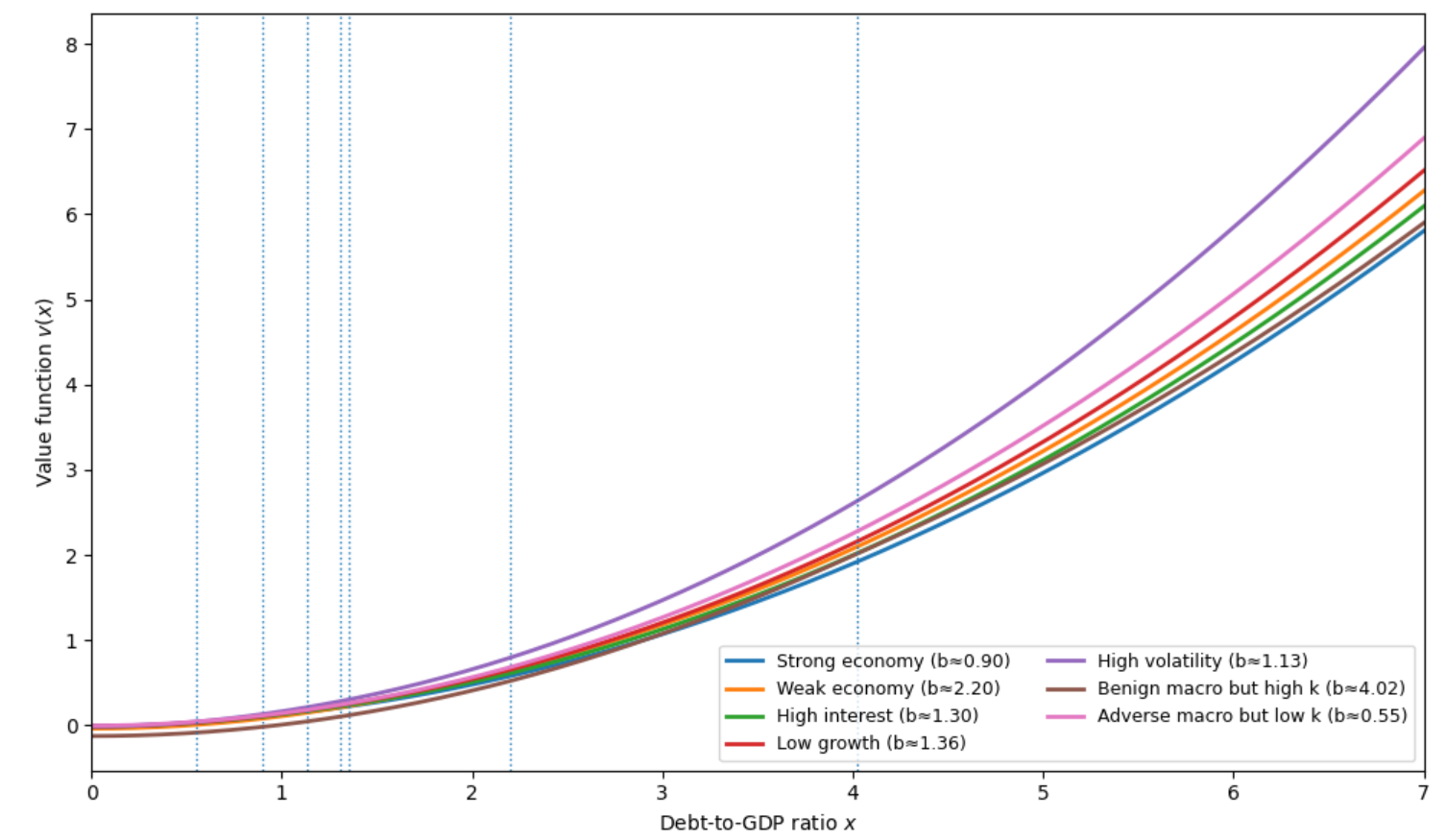}
\caption{Value function $v(x)$ trajectories across scenarios with scenario-specific thresholds \(b\).}
\label{fig:all_trajectories}
\end{figure}

Finally, we conclude the section by displaying in Figure \ref{fig:samplepaths} a representative sample paths of the optimal debt-to-GDP ratio process 
$\{X_t^{u^*}\}_{t\ge0}$, associated with the optimal fiscal policy 
$\{u^*(X_t)\}_{t\ge0}$, and the debt-to-GDP ratio process 
$\{X_t^{0}\}_{t\ge0}$ in the absence of government intervention.
The simulation is performed over a 30-year horizon with baseline parameters in Table \ref{tab:baseline_params}. The threshold level is set at $b = 1,28$, accordingly to our calculations of $b$.
The blue trajectories show how, under the optimal control, the debt ratio initially grows under an expansionary regime ($u_t = -U_1$) and then stabilizes around the threshold as the restrictive policy ($u_t = U_2$) is activated for $X_t > b$. 
Conversely, the orange trajectories represent the uncontrolled process, which grows persistently, highlighting the divergence that occurs when $r > g_0$ in the absence of corrective fiscal action. Overall, the figure illustrates the stabilizing effect of the optimal fiscal policy, which prevents debt divergence and promotes long-term sustainability.

\begin{figure}[H]
    \centering
    \includegraphics[width=0.95\textwidth]{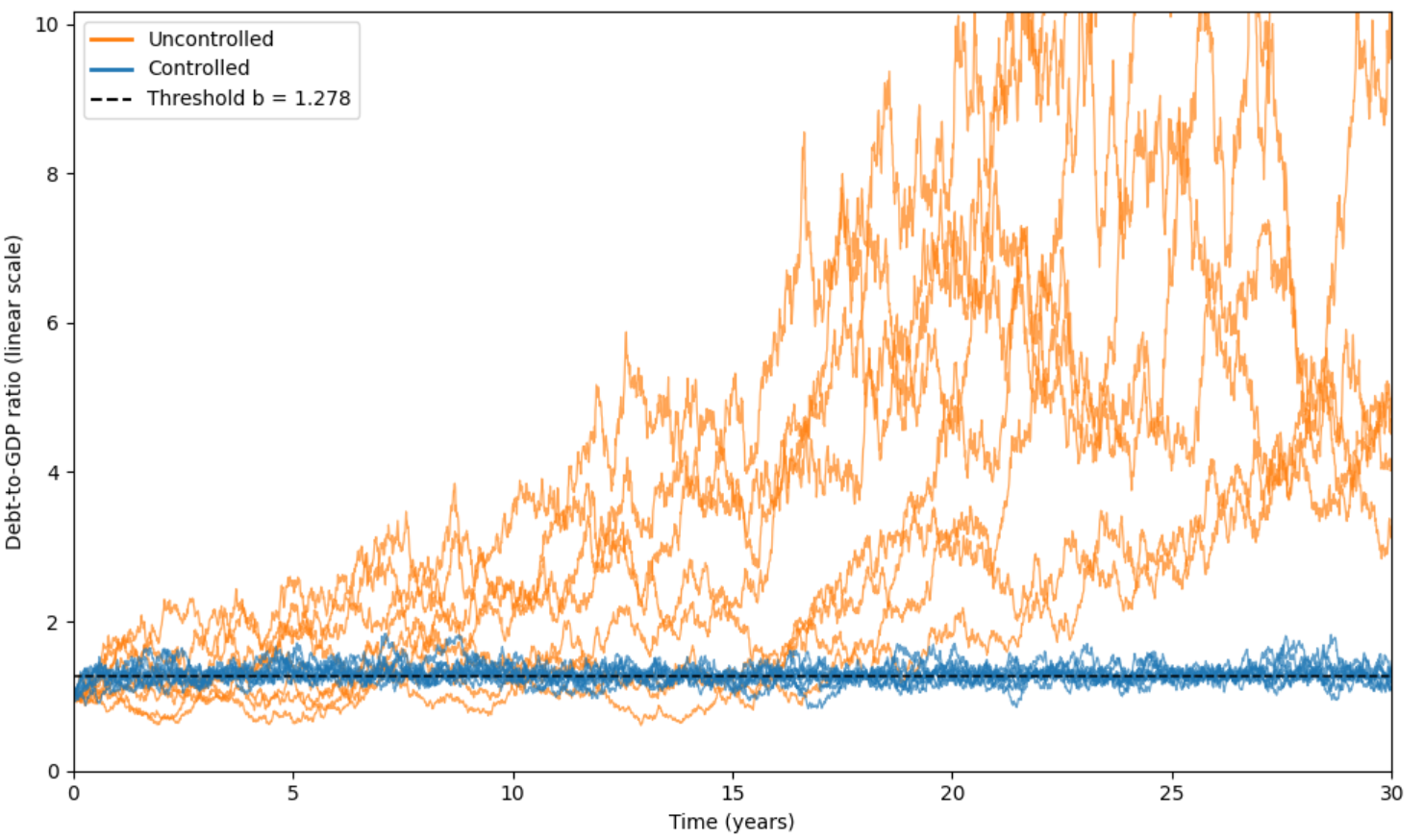}
    \caption{Simulated sample paths of the optimal debt-to-GDP ratio process $\{X_t^{u^*}\}_{t\ge0}$ (blue lines) and the debt-to-GDP ratio process $\{X_t^{0}\}_{t\ge0}$ in the absence of government intervention (orange  lines). 
The dotted line marks the threshold $b \approx 1.28$, separating expansionary ($u_t=-U_1$) and contractionary ($u_t=U_2$) regimes.
}
    \label{fig:samplepaths}
\end{figure}

\section{Conclusions}

This paper examines a stochastic control problem for the optimal management of the public debt-to-GDP ratio in a model that incorporates the dual effect of fiscal policy on both debt accumulation and economic growth. By introducing a cost functional that captures the trade-off between fiscal surpluses and deficits, the framework extends existing approaches to provide a more comprehensive description of sustainable debt dynamics under uncertainty. We  derived explicit and closed-form solutions in the case of a linear fiscal impact on GDP growth, identifying the conditions under which the optimal fiscal policy takes the form of either a constant rule or a threshold-type strategy. When the fiscal multiplier exceeds unity ($\alpha > 1$), the optimal policy is characterized by a constant deficit-to-debt ratio, reflecting the dominance of growth effects over debt accumulation. Conversely, when fiscal policy has a weaker effect on GDP ($0 < \alpha < 1$), the optimal control exhibits a switching behavior; an expansionary stance is optimal for moderate debt levels, while restrictive measures become optimal once the debt-to-GDP ratio surpasses a critical threshold. This threshold, determined in closed form, provides a rigorous quantitative measure of fiscal sustainability.
Numerical evaluations confirm the internal consistency and economic interpretability of the theoretical results. The value function is found to be increasing and convex, reflecting the nonlinear costs associated with high debt levels. Comparative analyses across different economic environments show that while the critical threshold is relatively stable, its curvature and associated welfare losses vary substantially between strong and weak economies, emphasizing the importance of macroeconomic fundamentals. Sensitivity analysis further highlights the pivotal role of the fiscal productivity parameter $\alpha$ and the political cost coefficient $k$ in shaping the optimal consolidation strategy.
Overall, the model offers new insights into the design of optimal fiscal policies under uncertainty. By providing explicit analytical solutions, it enhances the tractability of stochastic control approaches to sovereign debt management and contributes to a deeper understanding of the mechanisms governing debt sustainability, fiscal discipline, and economic resilience.

\begin{appendices}
\section{Proofs of Lemmas 3.2, 3.4 and 3.5}\label{appendice}
\begin{proof}[Proof of Lemma \ref{lemma1}]
We start considering the homogeneous part of the Eq. \eqref{hjbbanale2} (i.e. $U_1=0$) and we we make the following ansatz $\bar V(x)=C\zeta x^m$. Next we calculate the derivatives of $V(x)$
\begin{equation}
    \begin{split}
        \bar V'(x)&=C\zeta m x^{m-1}\\
        \bar V''(x)&=C\zeta(m-1)m x^{m-2}.
    \end{split}
\end{equation}
Substituting into the HJB equation \eqref{hjbbanale2} (with $U_1=0$), we have
\begin{equation}
\label{hjbbanale2.0}
       \frac{1}{2} \sigma^2 x^2 [C\zeta(m-1)m x^{m-2}] + \mu x[C\zeta m x^{m-1}]- \lambda [C\zeta x^m] + Cx^m = 0,
 \end{equation} 

that in terms of \( C x^m \) can be read as
\[
C x^m \left[
\zeta \left( \frac{1}{2} \sigma^2 m(m-1) + \mu m - \lambda \right)
+ 1
\right] = 0,
\]

the characteristic equation associated with the homogeneous terms is

\[
\zeta \left( \frac{1}{2} \sigma^2 m(m-1) + \mu m - \lambda \right) + 1 = 0
\]

from which we find $\zeta$,
\[
\zeta = -\frac{1}{\frac{1}{2} \sigma^2 m^2 + \left( \mu - \frac{1}{2} \sigma^2 \right) m - \lambda}.
\]

Now, we look for a solution $V(x)$ that solves the corresponding non-homogeneous equation
\begin{equation}
\label{hjbbanale3}
\frac{1}{2} \sigma^2 x^2 V''(x) + \mu xV'(x)- \lambda V(x) + Cx^m -kU_1= 0.
\end{equation}
 We look for  $V(x)=\bar{V}(x)+\theta$ which plugging in \eqref{hjbbanale3}
gives
\[
\frac{1}{2} \sigma^2 x^2 \bar{V}''(x) + \mu x\bar{V}'(x)- \lambda \bar{V}(x) + Cx^m - \lambda\theta -kU_1= 0,
\]
and, since $\frac{1}{2} \sigma^2 x^2 \bar{V}''(x) + \mu x\bar{V}'(x)- \lambda \bar{V}(x) + Cx^m=0$, we have that $\theta=\frac{-kU_1}{\lambda}$. Therefore, $ V(x)=\bar{V}(x)-\frac{kU_1}{\lambda}$, and this concludes the proof.\end{proof}

\begin{proof}[Proof of Lemma \ref{lemma2}]
  We start considering the homogeneous part of the Eq. \eqref{hjb} (i.e. $\bar{U}=0$) and in order to find a solution, we make the following ansatz
\begin{equation}
    \begin{split}
&\bar{V}(x) = A_1 x^{\gamma_1} + A_2 x^{\gamma_2} + C \zeta x^{m}\\
&\bar{V}'(x) = A_1 \gamma_1 x^{\gamma_1 - 1} + A_2 \gamma_2 x^{\gamma_2 - 1} + C \zeta mx^{m-1} \\
&\bar{V}''(x) = A_1 \gamma_1 (\gamma_1 - 1) x^{\gamma_1 - 2} + A_2 \gamma_2 (\gamma_2 - 1) x^{\gamma_2 - 2} +C \zeta (m-1)mx^{m-2}
    \end{split}
\end{equation}
Substituting into the HJB equation \eqref{hjb} (with $\bar{U}=0$), we have
\begin{equation}
    \begin{split}
&\frac{1}{2} \sigma^2 x^2 [ A_1 \gamma_1 (\gamma_1 - 1) x^{\gamma_1 - 2} + A_2 \gamma_2 (\gamma_2 - 1) x^{\gamma_2 - 2} +C \zeta (m-1)mx^{m-2}] +\\
&+\mu x [ A_1 \gamma_1 x^{\gamma_1 - 1} + A_2 \gamma_2 x^{\gamma_2 - 1} + C \zeta mx^{m-1}]+ \\
&- \lambda [A_1 x^{\gamma_1} + A_2 x^{\gamma_2} + C \zeta x^{m}] =0
    \end{split}
\end{equation}

Collecting terms by power of \( x \)
\begin{equation}\label{omogenea}
  \begin{split}
&A_1 \left( \frac{1}{2} \sigma^2 \gamma_1 (\gamma_1 - 1) + \mu \gamma_1 - \lambda \right) x^{\gamma_1}\\
&+ A_2 \left( \frac{1}{2} \sigma^2 \gamma_2 (\gamma_2 - 1) + \mu \gamma_2 - \lambda \right) x^{\gamma_2}\\
&+ \left( \bigg (\frac{1}{2} \sigma^2 (m-1) m + \mu (m-1) - \lambda\right)\zeta+1 \bigg)Cx^m  = 0,
\end{split} 
\end{equation}

the characteristic equation associated with the homogeneous terms is

\[
\frac{1}{2} \sigma^2 \gamma (\gamma - 1) + \mu \gamma - \lambda = 0,
\]

and by resolving the second grade equation we get $\gamma_1$ and $\gamma_2$. In a more compact form  we put \( \tilde{\mu} = \mu - \frac{1}{2} \sigma^2 \), so we can write 


\begin{equation}
\begin{split}
\label{gamma1} \gamma_{1} &= \frac{-\tilde{\mu} - \sqrt{ \tilde{\mu}^2 + 2 \lambda \sigma^2 }}{\sigma^2}\\
      \gamma_{2} &= \frac{-\tilde{\mu} + \sqrt{ \tilde{\mu}^2 + 2 \lambda \sigma^2 }}{\sigma^2};
   \end{split}
\end{equation}
with this choice of $\gamma_{1,2}$ the first two terms in Eq. \eqref{omogenea} are equal to zero.

We now look to determine $\zeta$ imposing that
\[
\zeta \bigg( \frac{1}{2} \sigma^2 (m-1) m + \mu (m-1) - \lambda \bigg)+1 =0,
\]
and so, we get the expression for $\zeta$.

Now, we look for a solution $V(x)$ that solves the corresponding non-homogeneous equation
\begin{equation}
\label{cbarra}
\frac{1}{2} \sigma^2 x^2 V''(x) + \mu x V'(x)- \lambda V(x) + Cx^m+\bar{U}= 0.
\end{equation}
 We look for  $V(x)=\bar{V}(x)+\theta$ which plugging in \eqref{cbarra}
gives
\[
\frac{1}{2} \sigma^2 x^2 \bar{V}''(x) + \mu x\bar{V}'(x)- \lambda \bar{V}(x) + Cx^m - \lambda\theta + \bar{U}= 0,
\]
and, since $\frac{1}{2} \sigma^2 x^2 \bar{V}''(x) + \mu x\bar{V}'(x)- \lambda \bar{V}(x) + Cx^m=0$, we have that $\theta=\frac{\bar{U}}{\lambda}$. Therefore, $ V(x)=\bar{V}(x)+\frac{\bar{U}}{\lambda}$, which concludes the proof.
\end{proof}
\begin{proof}[Proof of Lemma \ref{lemma3}]
 $\gamma_1<0$ and $\bar{\gamma_1}<0$ follows immediately from Eq. \eqref{gamma} since is a difference of two negative numbers.
We now focus on $\gamma_2> m$.
So, from Eq. \eqref{gamma1} we have,
\[
\frac{-\tilde{\mu}_2+\sqrt{\tilde{\mu}_2^2+2\sigma^2\lambda}}{\sigma^2}> m
\]
with, $\tilde{\mu}_2=\mu_2-\frac{1}{2}\sigma^2$, that is equal to, 
\[
\lambda> \frac{\sigma^2}{2}m^2+\tilde{\mu}_2m.
\]
However, from Assumption \ref{assumption1} $\lambda>\lambda_m$ and so it is sufficient to prove that $\lambda_m\geq\frac{\sigma^2}{2}m^2+\tilde{\mu_2}m$.

We recall, that $ \lambda_m=m(r-\bar{g_1}+U_1)+m(m-1)\frac{\sigma^2}{2}$, $\bar{g_1}=g_0-\alpha U_2$
so we have
\[
m(r-g_0+\alpha U_2+U_1)+m(m-1)\frac{\sigma^2}{2}\geq\frac{\sigma^2}{2}m^2+\bigg(\mu_2-\frac{1}{2}\sigma^2\bigg)m.
\]
By Eq. \eqref{mu} and some simple arithmetic steps we end up having
\begin{equation}
    U_1\geq-U_2
\end{equation}
which is always verified, and that completes the proof.
\end{proof}
\end{appendices}

\newpage
\bibliography{reference}
\bibliographystyle{plain}
\end{document}